\newcommand{\calE}{{\mathcal E}} 
\newcommand{\CC}{{\mathbb C}}
\newcommand{\RR}{{\mathbb R}}
\theoremstyle{plain}
\newtheorem{theorem}{Theorem}
\newtheorem{prop}{Proposition}[section]
\newtheorem{lem}[prop]{Lemma}
\theoremstyle{definition}
\newtheorem{rem}{Remark}[section]
\def\squarebox#1{\hbox to #1{\hfill\vbox to #1{\vfill}}}
\newcommand{\wt}[1]{\widetilde{#1}}
\newcommand{\wh}[1]{\widehat{#1}}
\newcommand{\Or}{\mathcal{O}}
\DeclareMathOperator{\spanop}{span}
\DeclareFontFamily{U}{mathx}{\hyphenchar\font45}%
   \DeclareFontShape{U}{mathx}{m}{n}{<->mathx10}{}%
   \DeclareSymbolFont{mathx}{U}{mathx}{m}{n}%
   \DeclareMathAccent{\widebar}{0}{mathx}{"73}%
  \newcommand{\widebar}[1]{\overline{#1}}%
\newcommand{\wb}[1]{\widebar{#1}}
\newcommand{\mc}[1]{\mathcal{#1}}
\newcommand{\ud}{\,\mathrm{d}}
\newcommand{\Hm}{\ce{H_2}}
\title{Symmetry Breaking and the Generation of Spin Ordered Magnetic States in Density Functional Theory due to Dirac
  Exchange for a Hydrogen Molecule}
\author{  
  Michael Holst \,\thanks{Department of Mathematics,
    University of California, San Diego,
    9500 Gilman Dr. 
    La Jolla, CA 92093, USA
    ({\tt mholst@math.ucsd.edu}).}
  \and
  Houdong Hu \,\thanks{Microsoft Corporation,
    Bellevue, WA, USA
    ({\tt vincehouhou@gmail.com}).}
  \and
  Jianfeng Lu\,\thanks{Department of Mathematics, Department of Physics, and
    Department of Chemistry,
    Duke University, 
    Box 90320, 
    Durham, NC 27708, USA
    ({\tt jianfeng@math.duke.edu}).}  
  \and
  Jeremy L.~Marzuola\,\thanks{Department of Mathematics,
    University of North Carolina at Chapel Hill, 
    CB\#3250 Phillips Hall, 
    Chapel Hill, NC 27599, USA
    ({\tt marzuola@math.unc.edu}).} 
  \and
  Duo Song\,\thanks{Physical and Computational Sciences Directorate, 
    Pacific Northwest National Laboratory, 
    Richland, WA 99354, USA
    ({\tt duo.song@pnnl.gov}).}
  \and
  John Weare\,\thanks{Department of Chemistry,
    University of California, San Diego,
    9500 Gilman Dr., 
    La Jolla, CA 92093, USA
    ({\tt jweare@ucsd.edu}).}}
\begin{document}    

\maketitle

\begin{abstract} 
We study symmetry breaking in the mean field solutions to the  {electronic structure problem} for the $2$ electron hydrogen molecule
 within the Kohn Sham (KS) local spin density {functional} theory with Dirac exchange (the XLDA model). This simplified model shows behavior related to that of the (KS) spin density functional theory (SDFT) predictions in condensed matter and molecular systems. The KS solutions to the constrained SDFT variation problem undergo spontaneous symmetry breaking {leading to the formation of spin ordered  states} as the relative strength of the non-convex exchange term increases.  {Numerically, we observe that with increases in the internuclear bond length the molecular ground state changes} from a paramagnetic state ({spin delocalized}) to an antiferromagnetic ({spin localized}) ground state and a symmetric delocalized  ({spin delocalized}) excited state. We further characterize the limiting behavior of the minimizer when the strength of the exchange term goes to infinity both analytically and numerically. This leads to further bifurcations and highly localized states with varying character. Finite element numerical results provide support for the formal conjectures. Various solution classes are found to be numerically stable. However,  {for changes in the $R$ parameter, numerical Hessian calculations demonstrate that these are stationary but not stable solutions.}
\newline

\end{abstract}

\section{Introduction}

In this paper, we report studies of the properties of density functional theory (DFT) energy minimizers within the context of the hydrogen molecule, {\Hm}. The (DFT) minimizers discussed  are related to those of  the Kohn-Sham spin density functional method. The exchange  correlation function \cite{Parr-Yang} is simplified by including only Dirac spin density exchange without correlation \cite{Parr-Yang} .  We will show that for fixed electron mass,
the structure of the minimizing Kohn-Sham solutions change character with the variation of parameters related to the relative strength of the exchange-correlation component of the
functional. {In particular, the changes of these parameters lead to bifurcations from globally stable delocalized product states with no spin localization to product states with electron spin localized on atomic sites (antiferromagnetic states). This behavior is similar to the formation of spin ordered states in the DFT analysis of highly correlated materials \cite{kubicki,Cox1992,Peng-Perdew2017,Roll2004}.} 

Similar studies varying the molecular bond length {have been} undertaken using robust finite
element methods for Hartree-Fock and SLDA functionals in
\cite{HoudongThesis} and for Hartree-Fock 
using a maximum overlap method in \cite{barca2014communication}. {The precise electron configurations that occur in the ground states of 
 such problems is important for}
further developing density functional theory (see e.g., \cite{Yang:08,Yang:12}) as well as directly in the application of density functional theory (DFT) to highly correlated condensed materials  \cite{kubicki,Cox1992,Peng-Perdew2017,Roll2004} and in spin ordered molecular systems. In addition, uniqueness and symmetry breaking in other
quantum mechanical models  have recently been studied widely in for
instance the works \cite{FL2,FL1} for polaron models,
\cite{gontier2018lower,gontier2018spin,GriesemerHantsch:12} for Hartree-Fock models of atoms, and many others.  A similar strategy to that undertaken here in one of our limits was explored for the periodic Thomas--Fermi--Dirac--von Weizs\"acker model in \cite{ricaud2017,ricaud2017symmetry}.

We consider a neutral hydrogen molecule {\Hm} with nuclei placed $2R$
apart. The external potential is given by (after a possible coordinate
change){
\begin{equation}\label{eq:defVR}
  V_R(x) = - \frac{1}{\abs{x - R e_1 }} - \frac{1}{\abs{x + R e_1}},
\end{equation}
where $e_1$ is the $(1, 0, 0)$ vector in $\RR^3$.} The two-electron Schr\"odinger operator is given by 
\begin{equation}
  H_2 = - \frac{1}{2} \Delta_{x_1} - \frac{1}{2} \Delta_{x_2} + V_R(x_1) + V_R(x_2) + \frac{1}{\abs{x_1 - x_2}},
\end{equation}
{where $x_1$ and $x_2$ denote the position of the two electrons in the system. Atomic units are used throughout.}

In this work, we will consider the spin-polarized density functional
theory with the exchange energy taken to be Dirac exchange and
without correlation energy. In the literature, the spin free version of this model is sometimes
referred as the XLDA model. We are interested in the spin paired ground state of the system with spin up $\psi_+$ and spin down $\psi_-$  spatial wave functions. 
We would like to study the impact of the
exchange term on the electronic structure. Therefore, {we introduce
a strength parameter $\alpha \geq 0$ for the exchange energy functional}. The
DFT energy functional is hence given by
\begin{multline}
  \label{eq:LDA}
  \calE_{\alpha}(\psi_+, \psi_-) = \frac{1}{2} \int \abs{\nabla \psi_+}^2
  \ud x + \frac{1}{2} \int \abs{\nabla \psi_-}^2 \ud x
  +   \int V_R(x) \rho(x) \ud x \\
  + \frac{1}{2} \iint \frac{\rho(x) \rho(y)}{\abs{x-y}} \ud x \ud y -
  \alpha \int \left( \abs{\psi_+}^{8/3} + \abs{\psi_-}^{8/3} \right)
  \ud x,
\end{multline}
where the electron density of the system is given by
\begin{equation}
  \rho(x) = \abs{\psi_+(x)}^2 + \abs{\psi_-(x)}^2.
\end{equation}
 Note in particular in \eqref{eq:LDA} the Dirac exchange term is spin-polarized: let $\rho_{\pm} = \abs{\psi_{\pm}}^2$ {be the spin-polarized densities}, the exchange term is given by 
\begin{equation}
  -\alpha \int \bigl( \rho_+^{4/3} + \rho_-^{4/3} \bigr) \ud x
\end{equation}
as the exchange effect originating from Pauli's exclusion principle only 
occurs between electrons with same spin polarization \cite{Oliver-Perdew,Parr-Yang}.

{The potential $V_R$ defined in \eqref{eq:defVR} corresponds to the {\Hm} molecule having reflection symmetry.}  We are interested in
the symmetry (delocalization) (or lack of symmetry, localization) of
$\psi_+ $ and $ \psi_-$. {In particular, when a solution
  inherits the symmetry of the potential $V_R$, the electron wave
  functions will be even split across both atoms, hence we refer to
  that state as a delocalized state.  Otherwise, each electron wave
  function will be supported on one particular atom in the molecule,
  in which case we call the electrons localized.  We note here that we
  are considering the only the breaking of spatial symmetry of the
  wave functions among neutral spin minimizers (i.e., spin singlet
  configurations).}  We call the minimizer with the symmetry
constraint, $\psi_+ = \psi_- = \psi_R$, a restricted minimizer to the
energy functional, denoted as $\psi_R$.  Thus
\begin{equation}\label{eq:RDFT}
  \begin{aligned}
    & \psi_R = \arg\min\; \calE_{\alpha}(\psi, \psi) \\
    & \qquad \text{s.t. } \quad \int \abs{\psi}^2 = 1. 
  \end{aligned}
\end{equation}
The unrestricted minimization on the other hand considers all possible
$\psi_+$ and $\psi_-$ with the normalization constraints. To
distinguish, we denote the minimizers as $\psi_{\pm}$.
\begin{equation}\label{eq:UDFT}
  \begin{aligned}
    & (\psi_+, \psi_-) = \arg\min \calE_{\alpha}(\psi_+, \psi_-) \\
    & \qquad \text{s.t. } \quad \int \abs{\psi_+}^2 = \int
    \abs{\psi_-}^2 = 1.
  \end{aligned}
\end{equation}
Our goal in this work is to understand the symmetry breaking, i.e.,
the question whether $\psi_{+} = \psi_- = \psi_R$. The following
result gives the existence of minimizers to both \eqref{eq:RDFT} and \eqref{eq:UDFT}.

\begin{prop}
  For all $\alpha \geq 0$, there exist solutions $(\phi_+, \phi_-) \in
  H^1 \times H^1$ with $\int |\phi_\pm|^2 \ud x = 1$ such that
  \begin{equation*}
    \calE_\alpha (\phi_+,\phi_-) = \min_{\psi_{\pm} \in H^1; \int |\psi_\pm|^2 dx = 1} \calE_\alpha(\psi_+, \psi_-).
  \end{equation*}
\end{prop}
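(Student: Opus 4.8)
The plan is to use the direct method in the calculus of variations, with the concentration--compactness principle to handle the loss of compactness on $\RR^3$. The first step is to show that $\calE_\alpha$ is bounded below on $\{(\psi_+,\psi_-):\|\psi_+\|_{L^2}=\|\psi_-\|_{L^2}=1\}$ and that minimizing sequences are bounded in $H^1\times H^1$. The Hartree term is nonnegative and may be discarded. The attraction $\int V_R\rho\,\ud x$ is controlled by Hardy's inequality together with the elementary bound $|x|^{-1}\le\epsilon|x|^{-2}+(4\epsilon)^{-1}$, giving $\int V_R\rho\,\ud x\ge -C\epsilon\bigl(\|\nabla\psi_+\|_{L^2}^2+\|\nabla\psi_-\|_{L^2}^2\bigr)-C\epsilon^{-1}$. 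The Dirac term is energy-subcritical: the Gagliardo--Nirenberg inequality $\|\psi\|_{L^{8/3}}^{8/3}\le C\|\nabla\psi\|_{L^2}\|\psi\|_{L^2}^{5/3}$ in $\RR^3$ shows that, on the constraint set, $\alpha\int(\rho_+^{4/3}+\rho_-^{4/3})\,\ud x\le\delta\bigl(\|\nabla\psi_+\|_{L^2}^2+\|\nabla\psi_-\|_{L^2}^2\bigr)+C\alpha^2\delta^{-1}$. Choosing $\epsilon,\delta$ small yields $\calE_\alpha(\psi_+,\psi_-)\ge\tfrac14\bigl(\|\nabla\psi_+\|_{L^2}^2+\|\nabla\psi_-\|_{L^2}^2\bigr)-C(\alpha)$, which gives both claims; hence, along a subsequence, a minimizing sequence $(\psi_+^n,\psi_-^n)$ converges weakly in $H^1\times H^1$ to some $(\phi_+,\phi_-)$, strongly in $L^p_{\mathrm{loc}}$ for $2\le p<6$, and a.e.

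Next I would note that $I:=\inf\calE_\alpha<0$ for every $\alpha\ge0$. Taking a radial normalized profile $g$ and the dilations $\psi_\mu(x)=\mu^{-3/2}g(x/\mu)$, one finds that as $\mu\to\infty$ the kinetic energy is $O(\mu^{-2})$ while the remaining terms are $O(\mu^{-1})$ with leading coefficient $2D(|g|^2,|g|^2)-4\int|g|^2|x|^{-1}\,\ud x-2\alpha\int|g|^{8/3}\,\ud x$, using that $V_R(x)\sim-2|x|^{-1}$ as $|x|\to\infty$; since $D(|g|^2,|g|^2)\le\int|g|^2|x|^{-1}\,\ud x$ for radial $g$ by Newton's theorem, this coefficient is at most $-2\int|g|^2|x|^{-1}\,\ud x<0$, so $\calE_\alpha(\psi_\mu,\psi_\mu)<0$ for large $\mu$. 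Negativity of $I$ excludes the ``vanishing'' alternative: if a minimizing sequence vanished, then $\psi_\pm^n\to0$ in every $L^p$, $2<p<6$, so the Dirac and Hartree terms tend to $0$ and $\int V_R\rho^n\,\ud x\to0$ since $V_R\to0$ at infinity, forcing $\liminf\calE_\alpha(\psi_+^n,\psi_-^n)\ge0>I$.

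The remaining ``dichotomy'' alternative --- part of the $L^2$-mass escaping to spatial infinity --- is excluded by a strict binding inequality. Writing $I(m_+,m_-)$ for the infimum with masses $\|\psi_\pm\|_{L^2}^2=m_\pm$ and $I^\infty(m_+,m_-)$ for the corresponding infimum of the translation-invariant functional obtained by deleting $\int V_R\rho\,\ud x$, the subadditivity $I(1,1)\le I(m_+,m_-)+I^\infty(1-m_+,1-m_-)$ holds by a cut-and-translate construction, and it is \emph{strict} for $0\le m_\pm\le1$, $(m_+,m_-)\ne(1,1)$: translating a would-be escaping lump back to the nuclei changes its energy by $\int V_R\rho\,\ud x<0$, strictly lowering the total. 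This forces the minimizing sequence to be tight, so $\psi_\pm^n\to\phi_\pm$ strongly in $L^p$ for $2\le p<6$; in particular $\|\phi_\pm\|_{L^2}=1$. Passing to the limit, the kinetic term is weakly lower semicontinuous, $\int V_R\rho^n\,\ud x\to\int V_R\rho_\phi\,\ud x$ with $\rho_\phi=|\phi_+|^2+|\phi_-|^2$ (split $V_R=V_R^{(1)}+V_R^{(2)}$ with $V_R^{(1)}\in L^{3/2}$ compactly supported and $V_R^{(2)}\in L^\infty$ vanishing at infinity), $\liminf D(\rho^n,\rho^n)\ge D(\rho_\phi,\rho_\phi)$ by Fatou, and $\int|\psi_\pm^n|^{8/3}\,\ud x\to\int|\phi_\pm|^{8/3}\,\ud x$ by strong $L^{8/3}$ convergence; hence $\calE_\alpha(\phi_+,\phi_-)\le\liminf\calE_\alpha(\psi_+^n,\psi_-^n)=I$, so $(\phi_+,\phi_-)$ is a minimizer. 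The one genuinely delicate step is the strict binding inequality ruling out escape of mass --- the familiar crux of the concentration--compactness method --- which works here precisely because $V_R$ is everywhere strictly negative and decays at infinity, so it is always energetically favorable to keep all of the mass near the nuclei.
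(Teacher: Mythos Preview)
Your proposal is correct and follows precisely the concentration--compactness strategy that the paper invokes: the paper does not give its own proof but simply refers to \cite{anantharaman2009existence} and \cite{gontier2015existence}, which carry out essentially the argument you sketch (coercivity via Hardy and Gagliardo--Nirenberg, negativity of the infimum to rule out vanishing, and a strict binding inequality to rule out dichotomy). Your outline is a faithful summary of that approach; the only place where more care is needed --- and you flag it yourself --- is the two-parameter strict binding inequality $I(1,1)<I(m_+,m_-)+I^\infty(1-m_+,1-m_-)$, where one must also track the Hartree cross-interaction when recombining the pieces and handle the cases where only one spin component loses mass.
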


For a proof of this proposition, we refer the reader to the Concentration Compactness tools employed in \cite[Theorem $1$]{anantharaman2009existence} or specifically for LDA models the recent work of \cite{gontier2015existence}, where a general existence theory is addressed for LDA models of this type with neutral or positive charge.  

For the energy functional \eqref{eq:LDA}, we have two parameters $R$
and $\alpha$ in the functional. We expect the following behavior of
the minimizers for different ranges of parameters:
\begin{enumerate}
\item For $\alpha = 0$ and any $R > 0$, the minimizer have the
  symmetry $\psi_+ = \psi_- = \psi_R$.
\item Fix $R \geq 0$, when we increase $\alpha$ from $0$: The
  minimizer is initially symmetric (hence it is continuous at
  $\alpha = 0$), the symmetry is broken for larger $\alpha$
  ($\psi_+ \neq \psi_-$). The critical parameter $\alpha$ for the
  transition from symmetric to asymmetric minimizer depends on $R$.
\item Fix $\alpha > 0$, for $R$ sufficiently large, the minimizer
  is asymmetric. 
\end{enumerate}
Therefore, this suggests a two-dimensional phase diagram where the
axes are $R$ and $\alpha$ with a phase transition from symmetric to
asymmetric minimizers.  In the current manuscript, we will fix $R$ and
vary the parameter $\alpha$ in our analysis {in order to
  prove that symmetry breaking occurs in the $\alpha$ parameter as
  predicted}.  However, we will demonstrate the $(\alpha, R)$ phase
diagram numerically {and hence lend numerical support to
  the conjectured behaviors in $R$}. Some technical difficulties arise
in the the analysis when varying $R$ {in particular when
  taking $R \to \infty$}, which we comment on in
Section~\ref{sec:conclusion} and plan to address in future work.  We
make our statements precise in the following theorems.
\begin{theorem}
\label{thm1}
  Fix $R > 0$, denote $\psi_R$ the minimizer of \eqref{eq:RDFT} and
  $\psi_{\pm}$ the minimizer of \eqref{eq:UDFT}, we have $\psi_{\pm} =
  \psi_R$ for $\alpha \ll 1$, and $\psi_+ \not= \psi_-$ for $\alpha
  \gg 1$.
\end{theorem}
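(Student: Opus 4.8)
\emph{Proof plan.} The two regimes are attacked with different tools, after a common reduction. A minimizer may be taken real and positive (replace $\psi_\pm$ by $\abs{\psi_\pm}$, which does not raise $\calE_\alpha$ by the Hoffmann--Ostenhof inequality, and is then $>0$ by the maximum principle applied to its Euler--Lagrange equation), so I would work with $\psi_\pm=\sqrt{\rho_\pm}$ and the polar variables $\rho_+=\sigma\cos^2\theta,\ \rho_-=\sigma\sin^2\theta$, where $\sigma=\rho_++\rho_-$ is the total density and the spin symmetry is the locus $\{\theta\equiv\pi/4\}$. Then $\tfrac12\int\abs{\nabla\psi_+}^2+\tfrac12\int\abs{\nabla\psi_-}^2=\tfrac12\int\abs{\nabla\sqrt\sigma}^2+\tfrac12\int\sigma\abs{\nabla\theta}^2$, so that
\[
\calE_\alpha=\tfrac12\!\int\abs{\nabla\sqrt\sigma}^2+\tfrac12\!\int\sigma\abs{\nabla\theta}^2+\!\int V_R\,\sigma+\tfrac12\iint\frac{\sigma(x)\sigma(y)}{\abs{x-y}}\ud x\,\ud y-\alpha\!\int\sigma^{4/3}\bigl(\cos^{8/3}\theta+\sin^{8/3}\theta\bigr).
\]

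\textbf{Regime $\alpha\ll1$.} At $\alpha=0$, $\calE_0$ depends only on the densities and is strictly convex in $\sigma$ (the Coulomb form is positive definite), so the minimizing total density $\sigma_0$ is unique; since the only $\theta$-dependence is through $\tfrac12\int\sigma\abs{\nabla\theta}^2\ge 0$, its vanishing forces $\theta$ constant on $\supp\sigma_0=\RR^3$, and the mass constraints then force $\theta\equiv\pi/4$. Hence the $\alpha=0$ minimizer is the unique symmetric pair $\psi^0=\sqrt{\sigma_0/2}$, which is the positive ground state of the Kohn--Sham operator $\mathcal K_0=-\tfrac12\Delta+V_R+\sigma_0\ast\abs{\,\cdot\,}^{-1}$ and is therefore nondegenerate, with gap $\delta_0:=\lambda_2(\mathcal K_0)-\lambda_1(\mathcal K_0)>0$. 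I would then: (i) get uniform $H^1$ (hence, by elliptic regularity, $L^\infty$) bounds on minimizers for $\alpha$ near $0$, from Hardy's inequality and the subcritical estimate $\|\psi\|_{8/3}^{8/3}\lesssim\|\nabla\psi\|_2$ on unit mass; (ii) show, by the concentration--compactness method (as for the existence Proposition above), that minimizers converge strongly in $H^1$ to $(\psi^0,\psi^0)$ as $\alpha\to0$ (vanishing is excluded since $\min\calE_\alpha<0$, and dichotomy by strict subadditivity — the $O(1)$ molecular binding beats the $O(\alpha^2)$ a piece escaping to infinity could gain from exchange); (iii) note that the second variation of $\calE_\alpha$ at the symmetric restricted minimizer $\psi_R^\alpha$ in the real antisymmetric direction $(\eta,-\eta)$ reduces, after using the restricted Euler--Lagrange equation $\bigl(-\tfrac12\Delta+V_R+2(\psi_R^\alpha)^2\ast\abs{\,\cdot\,}^{-1}-\tfrac{4}{3}\alpha(\psi_R^\alpha)^{2/3}\bigr)\psi_R^\alpha=\mu_\alpha\psi_R^\alpha$, to
\[
Q_\alpha[\eta]=4\bigl\langle\eta,(\mathcal K_\alpha-\mu_\alpha)\eta\bigr\rangle-\tfrac{32}{9}\alpha\!\int(\psi_R^\alpha)^{2/3}\abs{\eta}^2,\qquad \eta\perp\psi_R^\alpha,
\]
with $\mathcal K_\alpha$ the operator on the left of that equation; this is coercive on $\{\psi_R^\alpha\}^\perp$ for $\alpha$ small, since there $\mathcal K_\alpha-\mu_\alpha\ge\delta_0/2>0$ (the exchange potential is a small relatively form-bounded perturbation) while $\int(\psi_R^\alpha)^{2/3}\abs{\eta}^2\le\ep\|\nabla\eta\|_2^2+C_\ep\|\eta\|_2^2$. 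Steps (ii)--(iii) together — equivalently, an implicit function / Lyapunov--Schmidt argument for the Euler--Lagrange system, whose linearization at $(\psi^0,\psi^0)$ is invertible (gap in the antisymmetric block, nondegeneracy of the restricted problem in the symmetric block) and which is equivariant under $\psi_+\leftrightarrow\psi_-$ — show that for $\alpha\ll1$ the unique critical point, hence minimizer, near the diagonal is symmetric; so $\psi_+=\psi_-=\psi_R$.

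\textbf{Regime $\alpha\gg1$.} Here I would prove $\min_{\mathrm{unres}}\calE_\alpha<\min_{\mathrm{res}}\calE_\alpha$, which forces every unrestricted minimizer to be asymmetric. The sharp Gagliardo--Nirenberg inequality $\|\psi\|_{8/3}^{8/3}\le C_{\mathrm{GN}}\|\nabla\psi\|_2$ (mass one), together with Hardy and discarding the nonnegative Coulomb term, gives $\min_{\mathrm{res}}\calE_\alpha\ge-C_{\mathrm{GN}}^2\alpha^2-O(\alpha)$; the trial $\ell^{-3/2}Q_1\bigl((x-Re_1)/\ell\bigr)$, with $Q_1$ the normalized GN optimizer and $\ell_\alpha\sim\alpha^{-1}$ balancing kinetic $\sim\ell^{-2}$ against exchange $\sim-\alpha\ell^{-1}$, gives the matching upper bound, so $\min_{\mathrm{res}}\calE_\alpha=-C_{\mathrm{GN}}^2\alpha^2+O(\alpha)$ and $\|\nabla\psi_R^\alpha\|_2\asymp\alpha$. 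A Gagliardo--Nirenberg rigidity (profile decomposition) argument upgrades this to genuine concentration: $\ell_\alpha^{3/2}\psi_R^\alpha(\ell_\alpha\,\cdot\,+x_\alpha)\to Q_1$ in $H^1$, so $\psi_R^\alpha$ concentrates at the single scale $\ell_\alpha\to0$ about a point $x_\alpha$; and since the external energy is the only term whose size at order $\alpha$ depends on $x_\alpha$, and it is most negative on a nucleus, $\dist(x_\alpha,\{\pm Re_1\})\to0$, so $\abs{x_\alpha}\to R$. Now relocate one of the two piled-up electrons to the opposite nucleus: put $\psi_+=\psi_R^\alpha$ and $\psi_-=\wt\psi$ with $\wt\psi(x):=\psi_R^\alpha(-x)$. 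Because $V_R$ is even, the kinetic, external, and exchange energies are unchanged; only the Coulomb energy changes, from $2D[\rho_R]$ to $D[\rho_R]+D[\rho_R,\wt\rho]$, where $\rho_R=(\psi_R^\alpha)^2$, $\wt\rho=\wt\psi^2$, $D[f]=\iint\frac{f(x)f(y)}{\abs{x-y}}\ud x\,\ud y$, $D[\cdot,\cdot]$ is its bilinear form, and the cross term is $O(1/\abs{x_\alpha})=O(1/R)$ since the two lumps are $\approx 2R$ apart (with Agmon-exponential tails). Hence
\[
\calE_\alpha(\psi_R^\alpha,\wt\psi)-\calE_\alpha(\psi_R^\alpha,\psi_R^\alpha)=-D[\rho_R]+O(1),
\]
and $D[\rho_R]=\ell_\alpha^{-1}\bigl(D[Q_1^2]+o(1)\bigr)\to\infty$. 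Thus $\min_{\mathrm{unres}}\calE_\alpha<\min_{\mathrm{res}}\calE_\alpha$ for $\alpha\gg1$, and no symmetric pair can realize the unrestricted minimum, so $\psi_+\ne\psi_-$.

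\textbf{Main obstacle.} For $\alpha\ll1$ the substance lies in the compactness/uniform-bound step (ii) and, if one uses the implicit function theorem, in the mild regularity of $\calE_\alpha$ near the positive solution despite the $\abs{\psi}^{8/3}$ term. The real difficulty is the large-$\alpha$ analysis of the restricted minimizer: one must pass from $\|\nabla\psi_R^\alpha\|_2\asymp\alpha$ — which by itself does \emph{not} force $D[\rho_R]\to\infty$, since kinetic energy can be large without spatial concentration — to genuine single-scale concentration via Gagliardo--Nirenberg rigidity, and then pin the concentration point to a nucleus and, crucially, keep it away from the midpoint $x=0$, without which the reflected lump would overlap the original and the Coulomb saving would disappear. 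Establishing $D[\rho_R]\to\infty$ together with $\abs{x_\alpha}\to R$ from the variational structure is the crux; the remaining estimates are then routine.
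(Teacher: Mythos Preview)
Your small-$\alpha$ argument is essentially the paper's: at $\alpha=0$ uniqueness by convexity, then an implicit-function/Lyapunov--Schmidt continuation. The paper works directly with the Euler--Lagrange system, identifies the kernel of the linearization as $\spanop\{(\phi,-\phi)\}$, and does explicit bookkeeping with the mass constraints to force $c_0=0$; your packaging via polar variables $(\sigma,\theta)$ and the antisymmetric second-variation form $Q_\alpha$ is equivalent content in different language. Your formula for $Q_\alpha$ is correct, and the combination ``compactness of minimizers $+$ strict local minimality of $(\psi_R^\alpha,\psi_R^\alpha)$ on the constraint manifold'' is a valid alternative to the paper's explicit Lyapunov--Schmidt.

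Your large-$\alpha$ route is genuinely different. The paper never studies the restricted minimizer; it analyzes the \emph{unrestricted} minimizer directly, showing via Weinstein's orbital stability for the semilinear functional $F(\varphi)=\tfrac12\int|\nabla\varphi|^2-\int|\varphi|^{8/3}$ that each of $\psi_\pm$ is (after rescaling by $D_\alpha$) close to the GN optimizer, and then runs a location-optimization argument comparing $(\psi_+,\psi_-)$ with copies shifted to $\pm e_1$ to force $x_+\neq x_-$. You instead prove the energy inequality $\min_{\text{unres}}<\min_{\text{res}}$ by showing the restricted minimizer $\psi_R^\alpha$ concentrates at one nucleus and then reflecting one copy, so that only the Coulomb term changes and drops by $D[\rho_R]-D[\rho_R,\wt\rho]\sim\alpha$. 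Both approaches use the same rigidity input; yours is more elementary for the bare symmetry-breaking statement and isolates cleanly the single estimate $D[\rho_R]\to\infty$, while the paper's approach is what is needed for Theorem~2 (the detailed limiting profile and location of the unrestricted minimizer), which your argument does not touch. Your identification of the crux---ruling out concentration at the midpoint so that the reflected trial really separates---is accurate, and the mechanism you cite (the external energy gains $O(\alpha)$ at a nucleus versus $O(1)$ elsewhere, while all other $O(\alpha)$ terms are translation invariant) is the right one.
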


In other words, as we increase $\alpha$, the symmetry
$\psi_+ = \psi_-$ is broken. In fact, we can give a more precise
characterization of the minimizer $\psi_{\pm}$ as
$\alpha \to \infty$.

\begin{theorem}
\label{thm2}
  Fix $R > 0$, as $\alpha \to \infty$, {up to symmetries of the equation}, the rescaled and translated minimizer of \eqref{eq:UDFT} 
\begin{equation}
\label{eqn:minform}
  \alpha^{-\frac32} \psi_\pm ( \alpha^{-1} (x \mp  R e_1)) 
\end{equation}
converges to $\phi$ in $H^1$, where $\phi$ is the unique positive,
radial solution to the equation
\begin{equation}
-\frac12 \Delta \phi - \frac43 |\phi|^{\frac23} \phi + E \phi = 0, \ \text{with} \  \int |\phi|^2 dx = 1.
\end{equation}
This can also be seen as the constrained minimizer of the Lagrangian
\begin{equation}
  \label{eq:semilinearLag}
  \calE_{s}(\phi) = \frac{1}{2} \int \abs{\nabla \phi}^2
  \ud x - \int \abs{\phi}^{8/3} 
  \ud x,
\end{equation}
with mass $\int \abs{\phi}^2 \ud x = 1$.  In other words, as $\alpha
\to \infty$, each electron becomes concentrated over a different
nucleus.
\end{theorem}

Our results in the small $\alpha$ setting rely heavily on the results
of Lieb, Lions and others relating to the concentration compactness
phenomenon for constructing minimizers of constrained Lagrangians at
$\alpha =0$, then an application of the Implicit Function Theorem for
small $\alpha$.  Our result for large $\alpha$ on the other hand
follows from essentially comparing the variational problem to a
scale-invariant semi-linear problem, which in turn relies strongly on
the orbital stability of solitons for the unperturbed Dirac
nonlinearity in $3$ dimensions, $|u|^{\frac23} u$.  

The proof for the small $\alpha$ regime is presented in
Section~\ref{sec:small}, while the large $\alpha$ regime is treated in
Section~\ref{sec:large}.  We present the analysis in detail for fixed
$R > 0$ and varying $\alpha$ throughout the proof.  Without loss of
generality, for our analysis we will assume $R = 1$ and denote
$V = V_R$.  Detailed numerical studies of the $(\alpha,R)$ phase
diagram and in particular the transition between small and large
$\alpha$ for fixed $R$ are discussed in Section \ref{sec:numresults}.
Concluding remarks and a discussion of the analysis in the case of varying
$R$ are included in Section \ref{sec:conclusion}.  The numerical
methods are presented in Appendix~\ref{sec:numerics} using a finite
element package developed by the group of the first author and
implemented in the thesis of the second author to study variational
problems in electronic structure theory.

\section{Proof in the small $\alpha$ regime}
\label{sec:small}

\subsection{The restricted Hartree model: Case $\alpha = 0$}

When $\alpha = 0$, the energy functional we consider becomes
\begin{equation}
\label{eq:LDA0}
\calE_{0}(\psi_+, \psi_-) = \frac{1}{2} \int \abs{\nabla
  \psi_+}^2 \ud x + \frac{1}{2} \int \abs{\nabla \psi_-}^2 \ud x
+ \int V(x) \rho(x) \ud x 
+ \frac{1}{2} \iint \frac{\rho(x) \rho(y)}{\abs{x-y}} \ud x \ud y. 
\end{equation}

Without the exchange-correlation energy, the minimizer is always
symmetric. Indeed, fixing any density $\rho$ with $\int \rho = 2$, we
have
\begin{equation}
  \calE_{0}(\sqrt{\rho}/\sqrt{2}, \sqrt{\rho}/\sqrt{2}) 
  = \inf \bigl\{\calE_{0}(\psi_+, \psi_-) \mid 
  \abs{\psi_+}^2 + \abs{\psi_-}^2 = \rho \bigr\}.
\end{equation}
Define $\rho_+ = \abs{\psi_+}^2$ and $\rho_- = \abs{\psi_-}^2$, the
above follows from the convexity
\begin{equation}
  2 \int \abs{\nabla \sqrt{(\rho_+ + \rho_-) / 2}}^2 \ud x 
  \leq \int \abs{\nabla \sqrt{\rho_+}}^2 \ud x 
  + \int \abs{\nabla \sqrt{\rho_-}}^2 \ud x, 
\end{equation}
and the equality holds if and only if $\rho_+ = \rho_-$ (see
\cite[Page 177, Theorem 7.8]{LiebLoss:01}). Thus, we may denote the
common orbital function as $\phi = \psi_+ = \psi_-$, which minimizes the functional 
\begin{equation}\label{eq:calE}
  \calE_0(\phi) = \int \abs{\nabla \phi}^2 \ud x + 2 \int V(x) \abs{\phi}^2 \ud x
  + 2 \iint \frac{\abs{\phi(x)}^2 \abs{\phi(y)}^2}{\abs{x - y}} \ud x \ud y. 
\end{equation}
Note that this functional has the same form as the restricted Hartree
model treated in \cite[Theorem II.2]{lions1987solutions}, which
guarantees the existence of a minimizer. Moreover, the minimizer is
non-negative without loss of generality and satisfies the
Euler-Lagrange equation
\begin{equation}
\label{alpha0}
  - \frac12 \Delta \phi + E_0 \phi +  V \phi  + 2 \bigl( v_c  
\ast \abs{\phi}^2 \bigr)\, \phi = 0
\end{equation}
where $v_c(x) = \abs{x}^{-1}$ denotes the Coulomb kernel and $E_0 \geq
0$ is the Lagrange multiplier.  We now show that $E_0$ must be
strictly positive. Suppose $E_0 = 0$, define $W := V + 2 v_c \ast
\abs{\phi}^2$, we have
\begin{equation}
  - \frac{1}{2} \Delta \phi + W \phi = 0. 
\end{equation}
Using Newton's theorem, the spherical average of $W$, denoted by
$\wb{W}$ is non-positive outside the ball $B_R$ (since the ball
contains all the nuclei charge). Thus, we get trivially that the
positive part of $\wb{W}_+ = \max\{\wb{W}, 0\} \in
L^{3/2}(B_R^c)$. This implies that $\phi \not\in L^2(B_R^c)$ by
 \cite[Lemma 7.18]{lieb1981thomas}, which is clearly a contradiction,
since $\int \abs{\phi}^2 = 1$. Therefore, $E_0 > 0$.  This implies that the nuclear potential 
is properly binding in a similar sense to that explored in \cite{RuskaiStillinger:84}.

For a purpose that will be clear later, we also consider the variational
problem \eqref{eq:calE} with more general mass constraints and denote
the minimum as $I_{M}$: 
\begin{equation}
  \begin{aligned}
    I_{M} & := \inf \Bigl\{\mc{E}_0(\phi) \mid \int
    \abs{\phi}^2 = M / 2 \Bigr\} \\
    & = \inf \Bigl\{\mc{E}_0(\phi) \mid \int
    \abs{\phi}^2 \leq M / 2\Bigr\} \\
    & = \inf \Bigl\{\mc{E}_0(\sqrt{\rho}/\sqrt{2}) \mid \int \rho
    \leq M \Bigr\},
  \end{aligned}
\end{equation}
where the second equality follows from the fact that $I_{M}$ is
monotonically decreasing {in $M$} as we can always put some excessive charge
far away from the nuclei with negligible contribution to the
energy. Furthermore, $I_{M}$ is strictly convex for $M \in
[0, M_c)$ for some $M_c \geq 2$, which follows the
standard convexity argument applies to
$\mc{E}_0(\sqrt{\rho}/\sqrt{2})$ as in the proof of parts (iii) and
(iv) of \cite[Corollary II.1]{lions1987solutions} (see also the proof
of convexity of the energy of the related Thomas-Fermi-von
Weizs\"acker theory in \cite{benguria1981thomas}). We also have the
relation
\begin{equation}
  \frac{\partial I_{M}}{\partial M} \Big\vert_{M = 2} = - E_0 < 0,  
\end{equation}
since $E_0$ is the Lagrange multiplier corresponding to the constraint
{$\int \abs{\phi}^2 = M/2$}. This in turn guarantees that $M_c
> Z$, as in Part (i) of \cite[Corollary II.1]{lions1987solutions}. Therefore, denote $E_0(M)$ the corresponding Lagrange multiplier for $I_{M}$, we arrive at 
\begin{equation}\label{eq:derE0}
  - \frac{\partial E_0(M)}{\partial M} \Big\vert_{M = 2} = \frac{\partial^2 I_{M}}{\partial M^2} \Big\vert_{M = 2}  > 0. 
\end{equation}
Following the analysis of \cite[Theorem $3.1$]{lieb1977hartree} using
elliptic estimates, one observes that if $\phi \in H^1$ is a solution to
\eqref{alpha0}, then $\phi \in H^2$.

\subsection{Implicit function theorem analysis for small $\alpha$}

We consider \eqref{eq:LDA} for $\alpha$ small.  First of all, by
restricting to the class of solutions symmetric with respect to
reflection in $x$, we know there exists a delocalized solution obeying
the correct symmetry properties.  For $\alpha = 0$, \eqref{eq:LDA} is
a convex functional and there exists a \emph{unique} delocalized
solution $ \psi_+ = \psi_- = \phi$ such that $\| \phi \|_{L^2} =
1$. The following result extends the uniqueness to small $\alpha$.  This result is similar to one proved in \cite{le1993some} for the Thomas-Fermi-Dirac-von Weizs\"acker model.

\begin{prop}
\label{prop:smallalpha}
For {$\alpha \geq 0$} sufficiently small, there exists a unique, delocalized minimizer to \eqref{eq:LDA} {such that $\psi_+ = \psi_- = \phi$} with $\| \phi \|_{L^2} = 1$.  The dependence upon $\alpha$ is $C^1$.  
\end{prop}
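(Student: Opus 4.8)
The plan is to prove Proposition~\ref{prop:smallalpha} by setting up the Euler--Lagrange equation as a zero of a suitable smooth map and applying the Implicit Function Theorem in Banach spaces, with $\phi$ and the Lagrange multiplier $E$ as the unknowns and $\alpha$ as the parameter. Restricting throughout to the closed subspace $H^2_{\mathrm{sym}}$ of functions even under reflection $x_1 \mapsto -x_1$, I would define
\begin{equation*}
  F(\phi, E, \alpha) = -\tfrac12 \Delta \phi + V\phi + 2\bigl(v_c \ast |\phi|^2\bigr)\phi - \tfrac{4}{3}\alpha\, |\phi|^{2/3}\phi + E\phi,
\end{equation*}
viewed (together with the normalization) as a map $H^2_{\mathrm{sym}} \times \RR \times \RR \to L^2_{\mathrm{sym}} \times \RR$ via $(\phi, E, \alpha) \mapsto \bigl(F(\phi,E,\alpha), \int|\phi|^2 - 1\bigr)$. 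At $\alpha = 0$ the pair $(\phi_R, E_0)$ from \eqref{alpha0} is a zero, and by the earlier elliptic regularity remark $\phi_R \in H^2$. The two things to check are that $F$ is $C^1$ near this point and that the linearization in $(\phi, E)$ is an isomorphism; the IFT then yields a unique $C^1$ branch $\alpha \mapsto (\phi_\alpha, E_\alpha)$ of solutions to the Euler--Lagrange system, and a separate soft argument upgrades ``solution of the EL system'' to ``global minimizer.''

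For the smoothness step, the only nonsmooth-looking ingredient is the Dirac term $\phi \mapsto |\phi|^{2/3}\phi$. I would argue that $u \mapsto |u|^{2/3}u$ maps $H^2(\RR^3) \hookrightarrow L^\infty \cap H^2$ into $L^2$ and is $C^1$ there: the exponent $5/3 > 1$ makes the map at least once continuously differentiable away from trivial issues at $u=0$, and boundedness of $\phi$ (from the Sobolev embedding $H^2(\RR^3) \subset C^0_b$) keeps everything subcritical; its derivative is multiplication by $\tfrac{5}{3}|u|^{2/3}$, which is continuous in $u \in H^2$. The Hartree term $\phi \mapsto (v_c \ast |\phi|^2)\phi$ is smooth by the standard Hardy--Littlewood--Sobolev plus Sobolev estimates, and the linear terms are trivial. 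Hence $F$ is $C^1$ (indeed this is all we need for a $C^1$ branch).

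The main obstacle is showing the linearized operator $L := D_{(\phi,E)}(F, \text{mass})\big|_{(\phi_R, E_0, 0)}$ is invertible. Concretely $L$ acts on $(\eta, e)$ by
\begin{equation*}
  (\eta, e) \longmapsto \Bigl( -\tfrac12\Delta\eta + V\eta + E_0\eta + 2(v_c\ast|\phi_R|^2)\eta + 4\bigl(v_c \ast(\phi_R\eta)\bigr)\phi_R + e\,\phi_R,\ 2\!\int \phi_R\,\eta \Bigr).
\end{equation*}
This is a compact (relatively bounded, lower-order) perturbation plus the bordering term of the self-adjoint operator $-\tfrac12\Delta + V + E_0$, so it is Fredholm of index zero on $H^2_{\mathrm{sym}} \times \RR \to L^2_{\mathrm{sym}} \times \RR$; invertibility therefore reduces to triviality of the kernel. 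Suppose $L(\eta,e)=0$. Testing the first equation against $\phi_R$ and using the mass equation $\int \phi_R\eta = 0$ together with the positivity of the Hartree bilinear form shows, after the standard manipulation, that $e\int|\phi_R|^2 \le 0$ while also the quadratic form $\langle \eta, \mathcal{H}_0 \eta\rangle$ controls things; here $\mathcal H_0$ is the Hessian of $\mathcal E_0$ at the constrained minimizer restricted to $\{\phi_R\}^\perp$. Because $\mathcal E_0(\sqrt{\rho/2})$ is strictly convex in $\rho$ on a mass interval containing $M=2$ (the convexity quoted from \cite[Corollary II.1]{lions1987solutions}) and because \eqref{eq:derE0} gives $-\partial E_0/\partial M|_{M=2} > 0$, the constrained Hessian $\mathcal H_0$ is strictly positive on the codimension-one space orthogonal to $\phi_R$ --- the two possible degeneracies, a zero mode along $\phi_R$ itself and a zero mode associated with the mass constraint, are exactly killed by working modulo $\phi_R$ and by the strict inequality \eqref{eq:derE0}. (The reflection symmetry is used to discard any odd near-zero modes coming from approximate translation invariance, which are not present here anyway since $V$ is not translation invariant, but the symmetry restriction makes the bookkeeping clean.) This forces $\eta = 0$ and then $e=0$, so $\kerr L = 0$ and $L$ is an isomorphism.

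With the IFT branch in hand, it remains to identify $\phi_\alpha$ with the actual minimizer of \eqref{eq:RDFT} and to see it is the unique delocalized one. For uniqueness among symmetric critical points: any $H^1$ minimizer is in $H^2$ by the elliptic bootstrap cited above, lies in a fixed bounded set of $H^2$ (uniform bounds from the energy, using that $\|\phi\|_\infty$ and $\|\phi\|_{H^2}$ are controlled uniformly for $\alpha \le \alpha_0$ since the Dirac term is a lower-order, subcritical perturbation of the coercive $\alpha=0$ functional), and converges along subsequences as $\alpha \to 0$ to a minimizer of $\mathcal E_0$, which is unique and equals $\phi_R$; by the local uniqueness from the IFT, for $\alpha$ small enough every such critical point coincides with $\phi_\alpha$. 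Finally, $\phi_\alpha$ is a genuine minimizer (not just a critical point) because its energy $\mathcal E_\alpha(\phi_\alpha,\phi_\alpha)$ is $C^1$-close to $I_2 = \mathcal E_0(\phi_R)$, which is the infimum, and any competing minimizer would have to be one of these nearby critical points. This yields the stated unique delocalized minimizer with $C^1$ dependence on $\alpha$.
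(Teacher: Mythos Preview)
Your argument constructs a $C^1$ branch of solutions to the \emph{restricted} Euler--Lagrange equation --- a single scalar unknown $\phi$ with $\psi_+=\psi_-=\phi$ already imposed --- and establishes uniqueness of the minimizer within that symmetric class. This corresponds to the paper's Lemma~\ref{l:kernel} and the short subsection ``Construction of the local branch under the symmetry assumption,'' and that portion is essentially correct. But the proposition concerns the minimizer of the full two-component functional \eqref{eq:LDA} over unrestricted pairs $(\psi_+,\psi_-)$, and its content is precisely that this unrestricted minimizer satisfies $\psi_+=\psi_-$ for small $\alpha$. By working from the outset with a single $\phi$ (and, separately, by restricting to the reflection-even subspace $H^2_{\mathrm{sym}}$) you have assumed what needs to be proved; your ``soft argument'' at the end only compares with other \emph{symmetric} critical points and does not exclude a nearby pair with $\psi_+\neq\psi_-$ of lower energy.

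The paper addresses this by linearizing the full two-component system $F(\psi_+,\psi_-;\alpha,E_+,E_-)=0$ at $(\phi,\phi;0,E_0,E_0)$. The Jacobian $DF_\psi$ is \emph{not} invertible there: it has a one-dimensional kernel spanned by $(\phi,-\phi)$, since $\phi>0$ is the ground state of the operator $\mathcal{L}_{\phi,E_0}=-\tfrac12\Delta+E_0+V+2v_c\ast|\phi|^2$ obtained by acting on antisymmetric perturbations $f_+=-f_-$. A Lyapunov--Schmidt reduction is therefore required: one writes $(\psi_+,\psi_-)=(\phi,\phi)+c_0(\phi,-\phi)+(\eta_+,\eta_-)$ with $(\eta_+,\eta_-)\perp(\phi,-\phi)$, solves for $\eta_\pm$ and $E_\pm$ as functions of $(c_0,\alpha)$, and then uses the two separate mass constraints $\int|\psi_\pm|^2=1$ (equations \eqref{coneq1}--\eqref{coneq2}) to force $c_0=0$. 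That last step --- showing the kernel direction $(\phi,-\phi)$ cannot be activated consistently with both mass constraints --- is exactly what rules out symmetry-broken minimizers for small $\alpha$, and it is absent from your proposal.
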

  
The remainder of this section is devoted to the proof of Proposition
\ref{prop:smallalpha}.  The idea is to construct a symmetric solution
branch stemming from the unique solution at $\alpha = 0$ that comes
from the convexity of the energy functional in that limit.  While the
positive $\alpha$ perturbation is non-convex, the Euler-Lagrange
equations can be solved using a Lyapunov-Schmidt reduction. In fact,
we will see that we can construct an implicit function theorem
argument using the convexity at $\alpha = 0$ and in doing so, that
locally only the symmetric branch will be possible.  First, we will
allow the branch to vary with respect to mass, then we will fix the
Lagrange multipliers $E_+$ and $E_-$ (in most cases we will observe $E_+ = E_-$) as a function of $\alpha$ to guarantee the
mass $1$ electron branch. 
 
The Euler-Lagrange equations for $\calE_{\alpha}$ {can be
  written as the following with $F$ defined as a function on
  $(H^2)^2 \times \RR^3$} 
  \begin{equation}\label{eq:defF}
  F(\psi_+, \psi_-; \alpha, E_+,E_-) :=
  \begin{pmatrix}
    - \frac12 \Delta \psi_+ + E_+ \psi_+ + V \psi_+   +  \bigl( v_c \ast ( \abs{\psi_+}^2 + \abs{\psi_-}^2 ) \bigr)\, \psi_+ - \frac43 \alpha \abs{\psi_+}^{\frac23} \psi_+ \\
    - \frac12 \Delta \psi_- + E_- \psi_- + V \psi_- + \bigl( v_c
    \ast ( \abs{\psi_+}^2 + \abs{\psi_-}^2 ) \bigr)\, \psi_- - \frac43 \alpha
    \abs{\psi_-}^{\frac23} \psi_-
  \end{pmatrix} = 0.
\end{equation}
To apply the Lyapunov-Schmidt reduction, we need to address the kernel
of the Jacobian with respect to $\psi_\pm$ for the Euler-Lagrange
equations. This is given by the operator
\begin{align}
\label{DF}
& D_{\psi} F (\psi_+, \psi_-; \alpha, E_+, E_-) 
\begin{pmatrix}
  f_+ \\
  f_- 
\end{pmatrix} 
=  
\begin{pmatrix} 
  L_{+} f_+ + 2 \psi_+ v_c \ast (\psi_- \wb{f_-})  \\
  L_{-} f_- + 2 \psi_- v_c \ast (\psi_+ \wb{f_+})
\end{pmatrix}
\end{align}
for
\begin{align*}
  L_{+}(\psi_\pm;  \alpha, E_+) f_+ & := \Bigl( - \frac12 \Delta  + E_+ + V    +   v_c \ast ( \abs{\psi_+}^2 + \abs{\psi_-}^2 ) - \frac{20}{9} \alpha \abs{\psi_+}^{\frac23} \Bigr) f_+ 
  + 2 \psi_+ v_c \ast (\psi_+ \wb{f_+}); \\
  L_{-} (\psi_\pm; \alpha, E_-) f_- & := \Bigl( - \frac12 \Delta  + E_- + V    +    v_c \ast ( \abs{\psi_+}^2 + \abs{\psi_-}^2 ) - \frac{20}{9}\alpha \abs{\psi_-}^{\frac23} \Bigr) f_-
  + 2 \psi_- v_c \ast (\psi_- \wb{f_-}).
\end{align*}
For $\psi_+ = \psi_- = \phi$, the unique solution at $\alpha = 0$ with
resulting Lagrange multiplier $E_0$ stemming from the convexity of
$\calE_{0}$, we have
\begin{align*}
  D_\psi  F (\phi, \phi; 0, E_0,E_0) \begin{pmatrix} f_+ \\ f_-
\end{pmatrix} = 
\begin{pmatrix}
  L_{\phi, E_0} f_+ + 2 \phi v_c \ast (\phi \wb{f_-})    \\
  L_{\phi, E_0} f_- + 2 \phi v_c \ast (\phi \wb{f_+}) 
\end{pmatrix}
\end{align*}
for
\begin{equation*}
  L_{\phi, E_0} f   = \Bigl( - \frac12 \Delta  + E_0 + V    + 2 v_c \ast \abs{\phi}^2 \Bigr) f + 2 \phi v_c \ast ( \phi \wb{f} ).
\end{equation*}
In the class of symmetric solutions, the problem reduces to solving a scalar equation instead of a system of equations. 

\subsection{Analysis of the linearized operators for $\alpha = 0$}

We prove here that at $\alpha = 0$, $\psi_+ = \psi_- = \phi$, $E = E_0$, then the linearized operator has a kernel, but it can only lead to solutions where $\psi_+$ and $\psi_-$ take on different masses.  This is a key step in applying the implicit function theorem in $\alpha$ locally.  To see this, we linearize \eqref{alpha0} to get the operator
\begin{equation}
\label{linalpha0}
\wt{L}_{\phi,E_0} f   = \Bigl( - \frac12 \Delta  + E_0 + V    + 2 v_c \ast \abs{\phi}^2 \Bigr) f + 4 \phi v_c \ast ( \phi \wb{f} ).
\end{equation}
We observe that the operator $\wt{L}_{\phi, E_0}$ can be written in the form
\begin{equation*}
  \wt{L}_{\phi, E_0}  =  - \frac12 \Delta  + E_0 + V    + V_\phi + W_\phi,
\end{equation*}
where
\begin{equation*}
  V_\phi := 2 v_c \ast \abs{\phi}^2 
\end{equation*}
is a potential with $1/|x|$ decay and 
\begin{equation*}
  W_\phi f := 4 \phi v_c \ast ( \phi \wb{f} )
\end{equation*}
is a self-adjoint {convolution operator, where we note
  that the function $\phi$ is exponentially decaying}.  
  

Since $V + V_\phi + W_\phi$ is a relatively compact perturbation, we
observe that the continuous spectrum of $\wt{L}_{\phi, E_0} $ is the
interval $[E_0,\infty)$ by applying Weyl's Theorem, see
\cite{RS4,LiebLoss:01} for instance, or \cite{Lenzmann} where the
functional analysis of Hartree-style equations is discussed in some
detail.

\begin{lem}
\label{l:kernel}
The operator $ \wt{L}_{\phi, E_0}  $ has only trivial kernel.
\end{lem}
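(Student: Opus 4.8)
The plan is to show that $\wt L_{\phi,E_0}$ is in fact a strictly positive operator — more than the trivial‑kernel claim — by writing it as a sum of two nonnegative self‑adjoint pieces,
\[
  \wt L_{\phi,E_0} \;=\; (H_0 + E_0) \;+\; W, \qquad
  H_0 := -\tfrac12\Delta + V + 2\,v_c\ast|\phi|^2, \qquad
  W f := 4\,\phi\; v_c\ast(\phi\, \wb{f}).
\]
Here $H_0$ is the one‑body Schr\"odinger operator obtained by freezing the self‑consistent mean field at the $\alpha=0$ minimizer $\phi$ — self‑adjoint on $H^2$, with essential spectrum $[0,\infty)$ — and $W$ is a bounded self‑adjoint operator (in fact compact, with exponentially decaying kernel, but only boundedness is needed). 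I take $f$ real‑valued throughout, so $\wb{f}=f$; this is the case relevant to the implicit function theorem argument, the mass normalization removing the gauge direction $i\phi$ that would otherwise sit in the kernel of the full complex linearization. Note from \eqref{alpha0} that $(H_0+E_0)\phi = 0$.

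For the first piece, $H_0 + E_0 \geq 0$: since $\phi$ is nonnegative and $\phi\not\equiv 0$, Harnack's inequality applied to $-\tfrac12\Delta\phi = -(E_0 + V + 2 v_c\ast|\phi|^2)\phi$ (whose zeroth‑order coefficient is only mildly singular because of $V$, but regular enough for the strong maximum principle, using $\phi\in H^2$) gives $\phi > 0$ everywhere; a strictly positive $L^2$ eigenfunction of a Schr\"odinger operator is its nondegenerate ground state, so $-E_0 = \inf\spec(H_0)$, i.e. $H_0 + E_0 \geq 0$ (with kernel $\RR\phi$). This is the operator‑level manifestation of the convexity of the restricted Hartree functional from Section~\ref{sec:small}, and we also use $E_0 > 0$ established there. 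For the second piece, $W\geq 0$: for real $f$, $\langle f, Wf\rangle = 4\iint \frac{f(x)\phi(x)\,f(y)\phi(y)}{\abs{x-y}}\ud x\ud y$ is the Coulomb energy of the signed density $f\phi$, hence nonnegative, and it vanishes only when $f\phi\equiv 0$ by strict positive‑definiteness of the Coulomb kernel (Fourier symbol a positive multiple of $\abs{\xi}^{-2}$). Consequently, if $f\in H^2$ with $\wt L_{\phi,E_0}f = 0$, then $0 = \langle f,\wt L_{\phi,E_0}f\rangle = \langle f,(H_0+E_0)f\rangle + \langle f, Wf\rangle$ forces both summands to vanish; from the second, $f\phi\equiv 0$ a.e., and $\phi > 0$ then gives $f\equiv 0$. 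Since $0$ moreover lies below the essential spectrum $[E_0,\infty)$ with $E_0>0$, this upgrades to bounded invertibility of $\wt L_{\phi,E_0}$, which is what the subsequent implicit function theorem needs.

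The argument is short once the splitting is in hand. The only points demanding care are the strict positivity $\phi > 0$ — resting on the elliptic regularity $\phi\in H^2$ already noted together with Harnack despite the Coulomb singularities of $V$ — and the restriction to real‑valued perturbations: for genuinely complex $f$ one must separately account for the gauge mode $i\phi$, harmless because it is killed by the mass normalization imposed in the next step. I expect this real‑versus‑complex bookkeeping, rather than any analytic difficulty, to be the main thing to state precisely.
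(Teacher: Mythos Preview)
Your proof is correct and follows essentially the same approach as the paper: split $\wt L_{\phi,E_0}$ into the nonnegative Schr\"odinger piece $H_0+E_0$ (with $\phi$ as ground state) and the nonnegative Coulomb piece $W$, then argue that a kernel element forces both quadratic forms to vanish. Your endgame is slightly cleaner than the paper's --- you use directly that $\langle f,Wf\rangle=0\Rightarrow f\phi\equiv 0\Rightarrow f\equiv 0$, whereas the paper instead decomposes $f=\alpha\phi+\phi^\perp$ and treats the two cases $\phi^\perp\neq 0$ and $\phi^\perp=0$ separately --- but the substance is the same.
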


\begin{proof}
Let us assume to the contrary there exists $f \in H^2$ such that
\[   \wt{L}_{\phi, E_0}  f = 0 .\]
Then, we observe that
\[ 0 =  \bigl\langle ( - \tfrac12 \Delta  + E_0 + V    + V_\phi) f, f \bigr\rangle + \bigl\langle W_\phi f, f \bigr\rangle. \]
{Given the structure of $W_\phi$ and the nature of the state $\phi$, we have that
\[ \langle W_\phi f ,f \rangle > 0 \ \ \text{for} \ \ f \neq 0, \]
since the Coulomb kernel is strictly positive, which is easily seen from the Fourier representation.}

{Taking the orthogonal decomposition $f = c \phi + \phi^\perp$ and
using that $\phi$ is the unique kernel of the operator
$ L_- = - \frac12 \Delta + E_0 + V + V_\phi$, where the notation $L_-$ here is chosen to match that of the semilinear literature for linearized operators about nonlinear states, see for instance \cite{Lenzmann}.  Hence if $\phi^\perp \neq 0$, we have a contradiction immediately from
the coercivity of the operator $L_-$.  Thus, it remains the
  possibility that $f = c \phi$. However if $c \neq 0$, we have
  $\bigl\langle W_\phi f, f \bigr\rangle > 0$, and therefore $f = 0$.}
\end{proof}

\begin{rem}
  This is a similar strategy to that of standard semi-linear problems,
  however in such a case the perturbation of the $L_-$ operator is
  negative in total and hence the spectral theory of the linearized
  operator must be understood in much greater detail.  Here, the
  perturbation is actually positive, so the arguments are greatly
  simplified.  {Also, we have a potential $V$ here, which
    has broken the translation invariance and hence we do not need to
    consider a $1$-parameter family of functions, but just a single
    $\phi$ in the kernel of $L_-$.}
\end{rem}

\subsection{Construction of solutions near $\alpha = 0$}

\begin{prop}
\label{prop:sym}
  {The Jacobian $ D_\psi F (\phi, \phi; 0, E_0, E_0) $ as
    defined in \eqref{DF} has kernel given by
    $\spanop \{(\phi, -\phi) \}$. As a result, there exists a unique
    $C^1$ path of solutions in $(\alpha, E_+, E_-)$ for the equation
    \eqref{eq:defF} with fixed constraint $\| \psi_\pm \|_{L^2} = 1$
    in $H^2 \times H^2$. Moreover, the unique solution satisfies the
    symmetry $\psi_+ = \psi_-$.  } 
\end{prop}

\begin{proof}
  We must study the invertibility of $D_\psi F$ at $\alpha = 0$.  In the restricted
  space, $\psi_+ = \psi_-$, the invertibility is established in Lemma
  \ref{l:kernel} through the invertibility of {
    $\wt{L}_{\phi, E_0}$}.  {More generally, let us
    consider $(f_1, f_2)$ that solves $D_{\psi} F (f_1, f_2)^{\mathrm{T}} = 0$,
    then $\wt{L}_{\phi, E_0} (f_1 + f_2) = 0$.}  Hence, either
  $f_1 + f_2$ is a non-trivial kernel function of $\wt{L}_{\phi, E_0}$
  {(which is excluded by Lemma~\ref{l:kernel})} or
  $f_1 = - f_2 = f$ and $f$ is a non-trivial kernel function for a
  modified operator
  \[ \mathcal{L}_{\phi,E_0} f = \Bigl( - \frac12 \Delta + E_0 + V + 2
    v_c \ast \abs{\phi}^2 \Bigr) f , \] which through the equation
  satisfies $\mathcal{L}_{\phi, E_0} \phi = 0$.  Since $\phi > 0$, it
  is the ground state and simple. {Therefore the kernel of
    $D_\psi F (\phi, \phi; 0, E_0, E_0) $ is one dimensional and described completely as
    $\spanop \{(\phi, -\phi) \}$.  }

    The remaining proof relies on varying $E_\pm$ using
    the standard Lyapunov-Schmidt construction of $\psi_\pm (E, \alpha)$
    solving the Euler-Lagrange equation,
    {see \cite[Proposition 1]{kirr2011symmetry} for a
      general discussion of the method. We write }
    \[ ( \psi_+, \psi_-) = (\phi, \phi) + c_0 ( \phi, - \phi) +(
      \eta_+, -\eta_-), \] for $c_0 \sim \sqrt{\alpha}$ and 
    $\int (\eta_+ + \eta_-) \phi \ud x = 0$. 
    We claim that
    \[
      \| \eta_{\pm} (c_0, E_{\pm}-E_0, \alpha) \|_{H^2} \lesssim
      c_0^2, \ \ | E_{\pm} - E_0 | \lesssim c_0^2, \] {
      where the dependence of $\eta_{\pm}$ upon our bifurcating parameters have been made
      explicit and in particular we have shifted the dependence upon $E_\pm$ to that $E_{\pm} - E_0$ such that $\eta (0,0,0) = 0$ for simplicity. } Indeed, expanding \eqref{eq:defF} about
    $(\phi, \phi)$ in this fashion we have    
    \begin{align}
      \label{eqn:eta1}
      F(\phi, \phi; 0, E_0, E_0 ) + D_{\psi}F( \phi, \phi; 0, E_\pm) 
      \begin{pmatrix}
        \eta_+ \\
        \eta_-
      \end{pmatrix} + \mathcal{R} (\eta, c_0, \alpha,E_\pm - E_0,\phi) 
 = \begin{pmatrix}
        0 \\
        0
      \end{pmatrix},
    \end{align}
    where the remainder term $  \mathcal{R}  (\eta, c_0, \alpha,E_\pm - E_0,\phi) $ will be specified below and satisfies
    \[
| \mathcal{R}  | \leq  C ( |\eta|^2 + c_0^2 + \alpha + (E_\pm - E_0)).
\]
    We note here that the linearization is $D_{\psi}F( \phi, \phi; 0, E_\pm)$
    and not $D_{\psi}F( \phi, \phi; 0, E_0, E_0)$.  Using the properties of $D_{\psi}F$,
    $\phi$ and $E_0$, we then observe that we can first re-write the equation for
   \[ 
   (\eta_+, \eta_-) = (\eta_+, \eta_-) (c_0, E_\pm-E_0,\alpha) 
   \]
   as
    \begin{align}
      \label{eqn:eta2}
      \begin{pmatrix} 
        \eta_+ \\
        \eta_-
      \end{pmatrix} = (P^\perp D_{\psi} F (\phi, \phi; 0, E_\pm) P^\perp)^{-1}
      P^\perp  \mathcal{R}  (\eta, c_0, \alpha,E_\pm - E_0) ,
    \end{align}
    where
    $P^\perp \vec f = \vec f - \bigl\langle \vec f, ( \phi, -
    \phi)^{\mathrm{T}} \bigr\rangle ( \phi, - \phi)^{\mathrm{T}} $ for
    $\vec f=(f_1, f_2)^T$ projects to the orthogonal complement of the
    kernel of $D_{\psi} F(\phi, \phi, 0, E_0, E_0)$.   {This implies
    that \eqref{eqn:eta1} can be written as
    \begin{align*}
      \left( - \frac12 \Delta + E_\pm  + V \right) \eta_{\pm} & = ( E_\pm - E_0) \phi + c_0 ( E_\pm - E_0) \phi - \alpha | \phi (1 \pm c_0) + \eta_\pm |^{\frac23 } (\phi (1 \pm c_0) + \eta_\pm) \\
                                                              & + v_c * \left[  2 c_0^2 \phi^2 + 2 (1+c_0) \phi \eta_+ + 2 (1-c_0) \phi \eta_-  + \eta_+^2 + \eta_-^2     \right] \phi \\
                                                              &  \pm  c_0 v_c * \left[  2 c_0^2 \phi^2 + 2 (1+c_0) \phi \eta_+ + 2 (1-c_0) \phi \eta_-  + \eta_+^2 + \eta_-^2     \right] \phi \\
                                                              & + v_c * \left[   2 \phi^2 + 2 c_0^2 \phi^2 + 2 (1+c_0) \phi \eta_+ + 2 (1-c_0) \phi \eta_-  + \eta_+^2 + \eta_-^2    \right] \eta_{\pm} .
    \end{align*}
    As we are taking $|E_\pm - E_0|$ small,
    $P^\perp D_{\psi}F (\phi, \phi; 0, E_\pm) P^\perp$ is invertible since
    $P^\perp D_{\psi}F (\phi, \phi; 0, E_0) P^\perp$ is invertible. We
    observe that by the Implicit Function Theorem there exists a solution $\vec \eta = (\eta_-,\eta_+)$ such that 
    \begin{equation}
    \label{etabd}
      \| \vec{\eta} \|_{H^2} \leq C \bigl(c_0^2 + \alpha+ |E_+ - E_0| + |E_- - E_0|  \bigr).
    \end{equation}
 } 

Projecting \eqref{eqn:eta1} onto $(\phi, -\phi)$, we compute directly that\footnote{
  Here, we use the following bound pointed out to the authors by
  N. Visciglia: Let $\alpha>0$ be given. Then for every $a, b\in \CC$
  we have the following inequality:
\begin{equation*}
  \bigl\lvert (a+b)|a+b|^\alpha-a|a|^\alpha-b |b|^\alpha  \bigr\rvert \lesssim 
  (|a| |b|^\alpha + |b| |a|^\alpha).
\end{equation*}  
}
\begin{equation}
\label{Epm:eqn}
  (E_+ - E_0) - (E_- - E_0) + c_0 [ (E_+ - E_0) + (E_- - E_0) ] +
  \mathcal{O} \Bigl( \| \vec{\eta} \|_{L^2}^2 + c_0^3 + \alpha \bigl(
  c_0 + \| \vec{\eta} \|_{L^2}^{\frac23} \bigr) \Bigr) = 0,
\end{equation} 
where we have implicitly used uniform Sobolev bounds on $\phi$ and the smallness of $c_0$, $\alpha$ and $E_\pm - E_0$.  
This allows us to use the Implicit Function Theorem once again to
solve for $E_+$ given $E_-$ and observe that
\[ | E_+ - E_0 | \leq C\bigl( c_0^3 + \alpha c_0  + \alpha^{\frac53} + |E_- - E_0|\bigr), \]
where we have plugged in the bound on $\| \eta \|_{L^2}$ from \eqref{etabd}.  
Now, using the two constraint equations for the mass, we have
\begin{equation}
\int (2 c_0 \phi + c_0^2 \phi^2 + 2( 1+ c_0) \phi \eta_+ + \eta_+^2) dx = 0 
\label{coneq1}
\end{equation}
and
\begin{equation}
\int (-2 c_0 \phi + c_0^2 \phi^2 + 2( 1- c_0) \phi \eta_- + \eta_-^2) dx = 0. 
\label{coneq2}
\end{equation}
{Using the linear combination \eqref{coneq1} $+$ \eqref{coneq2} and the orthogonality of $(\eta_+,\eta_-)$ to $(\phi,\phi)$ as constructed, we first observe by plugging the resulting implicit bound on $\| \vec \eta\|_{L^2} \leq C c_0^2$ in \eqref{Epm:eqn} that
\[ |E_- - E_0| \leq C (\alpha + c_0^2 ), \]
which implies for the linear combination \eqref{coneq1} $-$ \eqref{coneq2}, we can observe that 
\[ c_0^2 \leq C \alpha. \] 
}

Once the overall dependence upon $\alpha$ has been determined, we realize that on the branch described above in \eqref{coneq1} and \eqref{coneq2}, everything is indeed higher order to the $\mathcal{O} (c_0)$ term.  Thus, $c_0=0$ lest we move off the mass $1$ branch. {Therefore, we have $\psi_+ = \psi_-$ for sufficiently small $\alpha$.}
\end{proof}

\begin{rem}
    We note that the nature of the kernel of $D_\psi F$ is not so surprising at
    $\alpha = 0$, as a major symmetry of $\calE_{0}$ would be
    {to multiply $(\psi_+, \psi_-)$ by a rotation
      matrix}, which is an invariant of the Lagrangian. However, given
    that at $\alpha = 0$, we have $\psi_+ = \psi_- = \phi$, this
    symmetry generates no new solutions except the one we have found
    in the kernel.  Using the convexity {of $\calE_{0}$},
    we have uniqueness of the symmetric solution $\phi$ as a minimizer
    having fixed mass $\| \phi \|_{L^2} =1$. 
 \end{rem}
    
\begin{rem}
  From the sign changes \eqref{coneq1} and \eqref{coneq2}, we expect
  that with no mass constraint the branch construction stemming from
  the kernel of $D_\psi F$ to leading order leads to $E_+ = -E_-$.  If we
  were allowed to make such a symmetric reduction, the arguments above
  can be simplified.
\end{rem}

\subsection{Construction of the local branch under the symmetry assumption}

In Proposition \ref{prop:sym} above, we established that the
bifurcation off $\alpha =0$ occurs in the symmetry class such that
$\psi_+ =\psi_-$.  Within this symmetry class, we demonstrate in this
section that one may construct a unique local branch of solutions that preserves the mass of the electronic states as $1$.  We could have absorbed this constraint above in a modification of the application of the implicit function theorem, but for simplicity of exposition we have split the two arguments apart.

Using that the linearization preserve the symmetry of solutions proven in Proposition \ref{prop:sym}, let us limit ourselves to solutions of the simplified Euler-Lagrange equation for $\phi (E, \alpha)$ given by {
\begin{equation*}
- \frac12 \Delta \phi(x) + E \phi(x) - V(x) \phi(x) + 2  \int \frac{| \phi |^2 (y) }{|x-y|} dy  \phi(x) - \alpha |\phi(x)|^{\frac23} \phi(x) = 0.
\end{equation*}}
{Denote the mass of $\phi$ by }
\begin{equation*}
M(E,\alpha) := \int | \phi(E, \alpha)|^2 dx.
\end{equation*}
By construction, $M(E_0,0) = 1$.  To find mass $1$ states, using that $\phi = \phi( E, \alpha)$, we wish to find $E(\alpha)$ solving
\begin{equation*}
M(E, \alpha) = \int |\phi (E(\alpha), \alpha)|^2 dx - 1= 0.
\end{equation*}
Hence, we apply the Implicit Function Theorem once more, which guarantees the solvability of $E(\alpha)$ provided 
\begin{equation*}
\frac{\partial M}{\partial E} \Bigr\vert_{E = E_0, \alpha = 0}  \neq 0 .
\end{equation*}
However, at $\alpha = 0$ this follows directly from \eqref{eq:derE0}. 
Using the Implicit Function Theorem for a small range of $\alpha$, there is an $E = E(\alpha)$ satisfying the mass constraint.  Thus the proof of Proposition \ref{prop:smallalpha} is complete.

\section{Localization and symmetry breaking for large $\alpha$}
\label{sec:large}

{In this section, we prove Theorem \ref{thm2} by classifying the large $\alpha$ behavior of the minimizer.}

\subsection{A priori energy estimate}

We consider a variational problem with only the kinetic and exchange
terms:
\begin{equation}\label{eq:semilinear}
  \min_{\varphi: \int \abs{\varphi}^2 = 1} \mc{F}(\varphi) = \frac{1}{2} \int \abs{\nabla \varphi}^2 - \int \abs{\varphi}^{8/3}.
\end{equation}
It is now classical in the theory of nonlinear Schr\"odinger equations that the minimizer of \eqref{eq:semilinear} exists.  {In fact, there is a unique radial minimizer, and all minimizers are translated versions of it, see for instance \cite{SS}.}
Denote $\varphi$ the radial minimizer of \eqref{eq:semilinear} centered at
zero, it satisfies
\begin{equation}
  - \frac{1}{2} \Delta \varphi  - \frac{4}{3} \abs{\varphi}^{2/3} \varphi + E \varphi = 0
\end{equation}
with $E$ being a strictly positive Lagrange multiplier. Moreover,
$\varphi$ decays exponentially as $\abs{x} \to \infty$.

We consider dilation operator {$D_\alpha$ for $\alpha > 0$} that preserves the
$L^2$ norm
{
\begin{equation}
  (D_\alpha f)(x) = \alpha^{3/2} f(\alpha x). 
\end{equation} }
Let $x_+$ and $x_-$ minimize  
\begin{equation}
  \min  \Bigl( \norm{\nabla \psi_{\pm} - \nabla (D_{\alpha} \varphi)(\cdot - x_{\pm})}_{L^2}^2 + E \norm{\psi_{\pm} - (D_{\alpha} \varphi)(\cdot - x_{\pm})}_{L^2}^2 \Bigr)  .
\end{equation}
We write the remainder as 
\begin{equation}\label{eq:remainder}
  \psi_{\pm} =   \bigl(D_{\alpha} (\varphi + w_{\pm}) \bigr)(\cdot - x_{\pm}). 
\end{equation}
Correspondingly, we have $\varphi = D_{\alpha}^{-1}
\tau_{x_{\pm}}^{-1} (\psi_{\pm}) - w_{\pm}$, to simplify notation, we denote 
\begin{equation}
  \wt{\psi}_{\pm} = D_{\alpha}^{-1} \tau_{x_{\pm}}^{-1} \psi_{\pm} 
  = \alpha^{-3/2} \psi_{\pm}\Bigl( \frac{x + x_{\pm}}{\alpha} \Bigr). 
\end{equation}

As $\{\psi_{\pm}\}$ minimize $\calE_{\alpha}$, we have
\begin{align}
  0  & \leq  \calE_{\alpha}\bigl( (D_{\alpha} \varphi)(\cdot - x_+), 
       (D_{\alpha} \varphi)(\cdot - x_-) \bigr) - \calE_{\alpha}(\psi_+, \psi_-) \notag \\
     & = \alpha^2  \bigl( 2 \mc{F}(\varphi) - \mc{F}(\wt{\psi}_+) - \mc{F}(\wt{\psi}_-) \bigr) + \int V (\rho_{\varphi} - \rho_{\psi})  \label{alphainfbd1}  \\
     & \hspace{1cm}+ \frac{1}{2} \iint \frac{\rho_{\varphi}(x) \rho_{\varphi}(y)}{\abs{x - y}} \ud x \ud y - \frac{1}{2} \iint \frac{\rho_{\psi}(x) \rho_{\psi}(y)}{\abs{x - y}} \ud x \ud y,  \notag
\end{align}
where we have set
\begin{equation}
\label{rhovarphieq}
  \rho_{\varphi}(x) = \abs{(D_{\alpha} \varphi)(x - x_+)}^2 + 
  \abs{(D_{\alpha} \varphi)(x - x_-)}^2. 
\end{equation}

Note that $\frac{1}{2} \iint \frac{\rho_{\psi}(x) \rho_{\psi}(y)}{\abs{x - y}} \ud x \ud y \geq 0$ {and $\int V \rho_{\varphi} \leq 0$, rearranging the terms, we obtain 
\begin{equation}\label{eq:scaledF}
  \mc{F}\bigl( \wt{\psi}_+ \bigr) + \mc{F}\bigl( \wt{\psi}_- \bigr) - 2 \mc{F}(\varphi)
  \leq \frac{1}{\alpha^2} \frac{1}{2}\iint \frac{\rho_{\varphi}(x) \rho_{\varphi}(y)}{\abs{x - y}} \ud x \ud y   - \frac{1}{\alpha^2} \int V \rho_{\psi} \ud x. 
\end{equation}
For the first term on the right hand side, we have by the definition of $\rho_{\varphi}$ in \eqref{rhovarphieq} that 
\begin{equation}
  \begin{aligned}
    \frac{1}{2}\iint \frac{\rho_{\varphi}(x)
      \rho_{\varphi}(y)}{\abs{x - y}} & = \frac{1}{2} \iint
    \frac{\bigl(\abs{(D_{\alpha} \varphi)(x - x_+)}^2 +
      \abs{(D_{\alpha} \varphi)(x - x_-)}^2\bigr) \bigl(
      \abs{(D_{\alpha} \varphi)(y - x_+)}^2 +
      \abs{(D_{\alpha} \varphi)(y - x_-)}^2\bigr)}{\abs{x - y}} \\
    & \leq \iint \frac{\abs{(D_{\alpha} \varphi)(x - x_+)}^2  \abs{(D_{\alpha} \varphi)(y - x_+)}^2 }{\abs{x - y}} + \iint \frac{\abs{(D_{\alpha} \varphi)(x - x_-)}^2  \abs{(D_{\alpha} \varphi)(y - x_-)}^2 }{\abs{x - y}} \\
    & = 2 \iint \frac{\abs{(D_{\alpha} \varphi)(x)}^2
      \abs{(D_{\alpha} \varphi)(y)}^2 }{\abs{x - y}} \\
    & = 2 \alpha \iint \frac{\abs{\varphi(x)}^2
      \abs{\varphi(y)}^2 }{\abs{x - y}}, 
\end{aligned}
\end{equation}
where we have used the scaling relation of $D_{\alpha}$ and change of variables $\alpha x \mapsto x, \alpha y \mapsto y$ in the last equality. 
To control the second term on the right hand side  of \eqref{eq:scaledF}}, recall that by Hardy's uncertainty
principle, we have for any $X \in \RR^3$ and $f \in H^1$
\begin{equation}\label{eq:Hardy}
  \int \frac{1}{\abs{x - X}} \abs{f(x)}^2 \ud x  
  \leq 4 \norm{f} \norm{\nabla f}. 
\end{equation}
Therefore, since $\norm{\psi_{\pm}} = 1$, we have 
\begin{equation}
  - \int V \rho_{\psi} \ud x = \int \frac{1}{\abs{x - e_1}} \bigl(
  \abs{\psi_+}^2 + \abs{\psi_-}^2) \ud x + \int \frac{1}{\abs{x +
      e_1}} \bigl( \abs{\psi_+}^2 + \abs{\psi_-}^2) \ud x \\
  \leq C \bigl( \norm{\nabla \psi_-} + \norm{\nabla
    \psi_+} \bigr).
\end{equation}
Thus, we arrive at
\begin{equation}\label{eq:Fupper}
  \begin{aligned}
    \mc{F}\bigl( \wt{\psi}_+ \bigr) + \mc{F}\bigl( \wt{\psi}_- \bigr) - 2
    \mc{F}(\varphi) & \leq \frac{C}{\alpha} + \frac{C}{\alpha^2} \bigl(
    \norm{\nabla \psi_+} + \norm{\nabla \psi_-} \bigr)   \\
    & \leq \frac{C}{\alpha} + \frac{C}{\alpha^2}  \bigl( \norm{\nabla D_{\alpha}(\varphi + w_+) } + \norm{\nabla D_{\alpha}(\varphi + w_-) } \bigr) \\
    & \leq \frac{C}{\alpha} + \frac{C}{\alpha} \bigl( \norm{\nabla
      w_+} + \norm{\nabla w_-} \bigr) \\
    & \leq \frac{C}{\alpha} + \frac{C}{\alpha} \bigl( \norm{\nabla
      w_+}^2 + \norm{\nabla w_-}^2 \bigr).
  \end{aligned}
\end{equation}
Using the result in \cite{Weinstein} for the semilinear functional \eqref{eq:semilinear}, the left hand side of \eqref{eq:Fupper} is bounded from
below as
\begin{equation}\label{eq:Flower}
  \mc{F}\bigl( \wt{\psi}_+ \bigr) + \mc{F}\bigl( \wt{\psi}_- \bigr) - 2
  \mc{F}(\varphi) \geq g\bigl(\norm{w_+}_{H^1}\bigr) + g\bigl( \norm{w_-}_{H^1}\bigr), 
\end{equation}
where 
\begin{equation}
  g(t) = c t^2 ( 1 - a t^{\theta} - b t^4) \quad \text{with} \quad a, b, c, \theta > 0. 
\end{equation}
Combining \eqref{eq:Fupper} and \eqref{eq:Flower}, we conclude that
\begin{equation}
  \lim_{\alpha \to \infty} \norm{w_{\pm}}_{H^1} = 
  \lim_{\alpha \to \infty} \norm{\wt{\psi}_{\pm} - \varphi}_{H^1} = 0. 
\end{equation}
In other words, up to translation and dilation, the minimizer of
\eqref{eq:UDFT} is close to the minimizer of the semilinear problem
\eqref{eq:semilinear} for $\alpha$ large.  {This establishes the $H^1$ convergence stated in Theorem \ref{thm2}.  We now proceed to establish the exact structure of the minimizer as stated in \eqref{eqn:minform}.}

\subsection{Location optimization}

We further determine the translation vectors $x_{\pm}$.  We claim that
as $\alpha \to \infty$, the translation vectors $x_{\pm} \to \pm e_1$
(up to swapping $x_+$ and $x_-$, recall that swapping $\psi_+$ and
$\psi_-$ does not change the energy).  The key observation is that the
kinetic and exchange energy terms are invariant with respect to
translation, and hence $x_{\pm}$ are determined by the potential and
Coulomb repulsion terms, which are higher order terms when $\alpha$ is
large.

For this, we consider shifted minimizers 
\begin{equation}
  \wh{\psi}_+ = \psi_+( \cdot + e_1 + x_+)
  \quad \text{and} \quad \wh{\psi}_- = \psi_-( \cdot - e_1 + x_-). 
\end{equation}
By \eqref{eq:remainder}, we have 
\begin{equation}
  \wh{\psi}_{\pm} = (D_{\alpha}\varphi)( \cdot \pm e_1)
  + (D_{\alpha} w_{\pm})( \cdot \pm e_1).
\end{equation}

Due to minimality, we have
\begin{equation}\label{eq:diffhatpsi}
  \begin{aligned}
    0 & \leq \calE_{\alpha}( \wh{\psi}_+, \wh{\psi}_-) - \calE_{\alpha}(
    \psi_+, \psi_-) \\
    & = \int V(\rho_{\wh{\psi}} - \rho_{\psi}) + \frac{1}{2} \iint
    \frac{\rho_{\wh{\psi}}(x) \rho_{\wh{\psi}}(y)}{\abs{x - y}} \ud x
    \ud y - \frac{1}{2} \iint \frac{\rho_{\psi}(x)
      \rho_{\psi}(y)}{\abs{x - y}} \ud x \ud y.
  \end{aligned}
\end{equation}
Recall $\rho_{\varphi}$ and similarly define $\rho_{\wh{\varphi}}$ as
\begin{align*}
  & \rho_{\varphi}(x) = \abs{(D_{\alpha} \varphi)(x - x_+)}^2 +
  \abs{(D_{\alpha} \varphi)(x - x_-)}^2; \\
  & \rho_{\wh{\varphi}}(x) = \abs{(D_{\alpha} \varphi)(x + e_1)}^2 +
  \abs{(D_{\alpha} \varphi)(x - e_1)}^2. 
\end{align*}
Denoting 
\begin{equation}
  \delta_{\text{VC}}( \rho_1, \rho_2) = \int V(\rho_1 - \rho_2) + \frac{1}{2} \iint
  \frac{\rho_1(x) \rho_1(y)}{\abs{x - y}} \ud x \ud y - \frac{1}{2} \iint \frac{\rho_2(x) \rho_2(y)}{\abs{x - y}} \ud x \ud y, 
\end{equation}
we rewrite \eqref{eq:diffhatpsi} as
\begin{equation}
  \delta_{\text{VC}}( \rho_{\wh{\psi}}, \rho_{\psi}) = \delta_{\text{VC}}( \rho_{\wh{\psi}}, \rho_{\wh{\varphi}}) + \delta_{\text{VC}}(\rho_{\wh{\varphi}}, \rho_{\varphi}) + \delta_{\text{VC}}(\rho_{\varphi}, \rho_{\psi}) \geq 0. 
\end{equation}

Let us estimate $\delta_{\text{VC}}(\rho_{\varphi}, \rho_{\psi})$
first. For the potential term, using \eqref{eq:Hardy} for $f =
\abs{\rho_{\varphi} - \rho_{\psi}}^{1/2}$,
\begin{equation}\label{eq:potentialdiff}
  \int  \abs{V (\rho_{\varphi} - \rho_{\psi})} \leq C 
  \norm{\abs{\rho_{\varphi} - \rho_{\psi}}^{1/2}}_{L^2}\norm{\nabla \abs{\rho_{\varphi} - \rho_{\psi}}^{1/2}}_{L^2} \leq C \norm{\nabla \abs{\rho_{\varphi} - \rho_{\psi}}^{1/2}}_{L^2}. 
\end{equation}
For the difference in Coulomb energy
\begin{multline}
  \left \lvert \frac{1}{2} \iint \frac{\rho_{\varphi}(x)
      \rho_{\varphi}(y)}{\abs{x - y}} \ud x \ud y - \frac{1}{2} \iint
    \frac{\rho_{\psi}(x) \rho_{\psi}(y)}{\abs{x - y}} \ud x
    \ud y \right\rvert \\
  \leq \iint \frac{\abs{\rho_{\varphi} - \rho_{\psi}}(x)
    \rho_{\varphi}(y)}{\abs{x - y}} \ud x \ud y
  + \frac{1}{2} \iint \frac{(\rho_{\varphi} - \rho_{\psi})(x)(\rho_{\varphi} - \rho_{\psi})(y)}{\abs{x - y}} \ud x \ud y \\
  \leq C \norm{\rho_{\varphi} - \rho_{\psi}}_{L^{3/2}}
  \norm{\rho_{\varphi}}_{L^1} + C \norm{\rho_{\varphi} -
    \rho_{\psi}}_{L^{6/5}}^2,
\end{multline}
where the last line uses the Hardy-Littlewood-Sobolev inequality.
Observe that using  interpolation and Gagliardo-Nirenberg-Sobolev
inequality, we have
\begin{align}
  & \norm{f}_{L^{6/5}} \leq \norm{f}_{L^1}^{3/4} \norm{f}_{L^3}^{1/4}
    \leq C \norm{f}_{L^1}^{3/4} \norm{\nabla \sqrt{f}}_{L^2}^{1/2}, \\
  & \norm{f}_{L^{3/2}} \leq \norm{f}_{L^1}^{1/2} \norm{f}_{L^3}^{1/2}
    \leq C \norm{f}_{L^1}^{1/2} \norm{\nabla \sqrt{f}}_{L^2}. 
\end{align}
Combined with the above three inequalities, we get
\begin{equation}\label{eq:coulombdiff}
  \left \lvert \frac{1}{2} \iint \frac{\rho_{\varphi}(x)
      \rho_{\varphi}(y)}{\abs{x - y}} \ud x \ud y - \frac{1}{2} \iint
    \frac{\rho_{\psi}(x) \rho_{\psi}(y)}{\abs{x - y}} \ud x
    \ud y \right\rvert 
  \leq C \norm{\nabla
    \abs{\rho_{\varphi} - \rho_{\psi}}^{1/2}}_{L^2}.
\end{equation}

To estimate the right hand side of \eqref{eq:potentialdiff} and
\eqref{eq:coulombdiff}, by definition 
\begin{equation}
  \begin{aligned}
    \norm{\nabla \abs{\rho_{\varphi} - \rho_{\psi}}^{1/2}}_{L^2} &
    \leq \norm{\nabla \bigl(2 \abs{D_{\alpha} \varphi} \abs{D_{\alpha}
        w_+} + 2 \abs{D_{\alpha} \varphi} \abs{D_{\alpha} w_-} +
      \abs{D_{\alpha} w_+}^2 + \abs{D_{\alpha} w_-}^2
      \bigr)^{1/2}}_{L^2} \\
    & \leq C \biggl( \norm{\nabla \bigl( \abs{D_{\alpha} \varphi}
      \abs{D_{\alpha} w_+}\bigr)^{1/2}}_{L^2} + \norm{\nabla \bigl(
      \abs{D_{\alpha} \varphi} \abs{D_{\alpha} w_-}\bigr)^{1/2}}_{L^2}
    \\
    & \qquad \qquad + \norm{\nabla \abs{D_{\alpha} w_+}}_{L^2} +
    \norm{\nabla \abs{D_{\alpha} w_-}}_{L^2} \biggr) \\
    & \leq C \alpha \bigl(\norm{w_+}_{H^1} + \norm{w_-}_{H^1}\bigr), 
  \end{aligned}
\end{equation}
where we have used the convexity of $\abs{\nabla \sqrt{\rho}}^2$  and the Cauchy-Schwarz inequality.
Note that the $\alpha$ pre-factor on the right hand side is natural
from the scaling, since the characteristic length scale of $\rho_{\varphi}$ is order $1/\alpha$ due to the construction by dilation. Therefore, to sum up,  \begin{equation}
  \abs{\delta_{\text{VC}}(\rho_{\varphi}, \rho_{\psi})} \leq C \alpha \bigl(\norm{w_+}_{H^1} + \norm{w_-}_{H^1}\bigr). 
\end{equation}
It is easy to check that the same upper bound also holds for
$\delta_{\text{VC}}(\rho_{\wh{\varphi}}, \rho_{\wh{\psi}})$. Thus, 
\begin{equation}
  \delta_{\text{VC}}(\rho_{\wh{\varphi}}, \rho_{\varphi}) \geq - C \alpha \bigl(\norm{w_+}_{H^1} + \norm{w_-}_{H^1}\bigr). 
\end{equation}

We now turn the above estimate of the Coulomb energy difference into
an estimate of the translation vectors $x_{\pm}$. For this, we
calculate more explicitly $\delta_{\text{VC}}(\rho_{\wh{\varphi}},
\rho_{\varphi})$ (recall that $\varphi$ is the unique radial minimizer
to the semilinear functional \eqref{eq:semilinear}). 
We have 
\begin{multline}
\delta_{\text{VC}}(\rho_{\varphi}, \rho_{\wh{\varphi}}) = 
\int V ( \rho_{\varphi} - \rho_{\wh{\varphi}}) + \iint \frac{\abs{D_{\alpha} \varphi}^2(x - x_+) \abs{D_{\alpha} \varphi}^2 (y - x_-)}{\abs{x - y}} \ud x \ud y \\
-
\iint \frac{\abs{D_{\alpha} \varphi}^2(x - x_+) \abs{D_{\alpha} \varphi}^2(y - x_-)}{\abs{x - y}} \ud x \ud y .
\end{multline}
As $\varphi$ decays exponentially, we have 
\begin{equation}
\iint \frac{\abs{D_{\alpha} \varphi}^2(x - x_+) \abs{D_{\alpha} \varphi}^2(y - x_-)}{\abs{x - y}} \ud x \ud y \lesssim \frac{1}{\alpha}, 
\end{equation}
and therefore 
\begin{equation}\label{eq:positionxpm}
  \int V ( \rho_{\varphi} - \rho_{\wh{\varphi}}) + \iint \frac{\abs{D_{\alpha} \varphi}^2(x - x_+) \abs{D_{\alpha} \varphi}^2(y - x_-)}{\abs{x - y}} \ud x \ud y \leq    
  C \alpha \bigl(\norm{w_+}_{H^1} + \norm{w_-}_{H^1}\bigr) + \Or(\alpha^{-1}). 
\end{equation}
This implies 
\begin{multline}
  \lim_{\alpha \to \infty} \frac{1}{\alpha} \int V ( \rho_{\varphi} - \rho_{\wh{\varphi}}) \\
= \lim_{\alpha \to \infty} - \frac{1}{\alpha} \int \Bigl(\frac{1}{\abs{x - e_1}} + \frac{1}{\abs{x + e_1}}\Bigr)  \bigl( \abs{D_{\alpha} \varphi}^2(x - x_+) 
- \abs{D_{\alpha} \varphi}^2(x - e_1) \bigr) \ud x \\+
\lim_{\alpha \to \infty}  - \frac{1}{\alpha} \int \Bigl(\frac{1}{\abs{x - e_1}} + \frac{1}{\abs{x + e_1}}\Bigr)  \bigl( \abs{D_{\alpha} \varphi}^2(x - x_-) - \abs{D_{\alpha} \varphi}^2(x + e_1) \bigr) \ud x 
= 0, 
\end{multline}
since the second term on the left hand side of \eqref{eq:positionxpm} is non-negative. Note that the two limits in the middle of the above equation are both non-negative. We have 
\begin{equation}
  \lim_{\alpha \to \infty} - \frac{1}{\alpha} \int \Bigl(\frac{1}{\abs{x - e_1}} + \frac{1}{\abs{x + e_1}}\Bigr)  \bigl( \abs{D_{\alpha} \varphi}^2(x - x_+) 
- \abs{D_{\alpha} \varphi}^2(x - e_1) \bigr) \ud x = 0. 
\end{equation}
This implies that $ \min \{\abs{x_+ - e_1}, \abs{x_+ + e_1} \}$
converges to zero, and similarly for $x_-$. Thus, as $\alpha \to
\infty$, $x_{\pm}$ approaches $\{e_1, -e_1\}$. They cannot converge to
the same point, as otherwise the Coulomb interaction is obviously
higher. Therefore, we arrive at the conclusion of Theorem \ref{thm2}.

\section{Numerical solution to Kohn-Sham SDFT (KS-SDFT) equations for $\psi_+$ and $\psi_-$ as a function of $\alpha$}

\label{sec:numresults}

The variation of the  energy functional given in Eq. (\ref{eq:LDA}) with normalization constraints, leads to Euler-Lagrange equations defining
the spin up, $\psi_{+}$, and the spin down, $\psi_{-}$,  KS-SDFT orbital solutions as a function of exchange strength $\alpha$ {and the internuclear bond length, $2R$}. 
In this section we outline the  finite element methods (FEM)  \cite{braess2001finite, brenner2008mathematical, HoudongThesis} we used to produce numerical solutions to these equations and determine their stability.   
These solutions, characterized by the symmetry of the orbital functions and their localization within the molecular framework, were used to explore the transitions between the regions of stability identified by the theorems in Sections \ref{sec:small} and \ref{sec:large}.  
{In the process of generating numerically stable solutions to the Euler Lagrange equations several new classes of solutions were identified. These may have important consequences for the application of Kohn-Sham methods but were not analyzed in Sections \ref{sec:small} and \ref{sec:large}.
The stability and stationary character of the solutions generated with variation in the $R$ parameter are validated via Hessian analysis (see the Appendix)}.

An important feature of the FEM approach we used is that the
expansions of $\psi_-$ and $\psi_+$ in the FEM basis \cite{fenics2012}
are not constrained by any preconceived notion as to the nature of the
solution as is implicit in the atomic orbital expansion basis of
quantum chemistry software \cite{Parr1972,Szabo1989,gauss}. This is
particularly important in our application because of the form of the
KS solutions to (\ref{eq:LDA}) as a function of $\alpha$ (e.g., for
large $\alpha$) are unknown. The numerical problem and the FEM method
we developed for its solution are described in more detail in the
Appendix. A novel feature of the numerical method we have used is that
its time to solution scales linearly with the size of the basis
\cite{HoudongThesis}.

\subsection{Overview of numerical method (FEM)}
Our numerical implementations are based on application of the Python
FEniCS finite element (FEM) package \cite{fenics2012, HoudongThesis},
which is a collection of free software with an extensive list of
features for automated, efficient, finite element solution methods for
differential equations. The source codes implementing the linear
scaling finite element solver described below can be found at the
FEniCS project homepage.\footnote{\url{https://fenicsproject.org}}
More details specific to our calculation are given in the Appendix.

The FEM calculation domain used here is a fixed square box of
dimension $50\times 50\times50$ atomic units which easily contains the
{\Hm} molecule (size $\approx$ 2 atomic units). Because the bound
state molecular orbitals decay exponentially away from the positions
of the nucleus, we apply zero boundary conditions at the domain edges
for the wave functions. The Coulomb potentials {required in the calculation}
are calculated from Poisson's equation using free space boundary
conditions. The singularities of the attractive nuclear potentials,
(\ref{eq:defVR}) are numerically removed by adding a small
positive constant in the denominator \cite{HoudongThesis}.
 
To accommodate the more rapid variation of the $\psi$ functions near
the atomic nucleus the finite element grid is adapted within the
domain, see Figure~\ref{fig:grid}.  This is an essential feature of
atomic and molecular electronic structure calculations
\cite{bylaska2009adaptive,weare1997adaptive,weare1995parallel} that do
not introduce pseudopotentials \cite{kubicki}.

\begin{figure}[H]
\centering
\includegraphics[scale=.45]{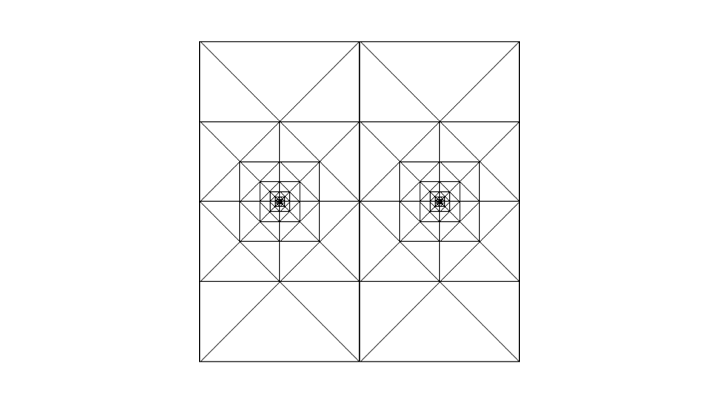}
\caption{A representation of the finite element grid used in calculation. Each triangle represents a tetrahedron in the real calculation. Note that the density of the mesh is significantly increased near the two atomic nuclei (see Appendix for further discussion as to how this mesh is generated).}
\label{fig:grid}
\end{figure}

In these FEM calculations each molecular orbital ($\psi_+$ or $\psi_-$) is written as an expansion in a finite element basis, $\eta_{i}$, with local support centered on the grid points in Figure \ref{fig:grid}\cite{fenics2012,bylaska2009adaptive,braess2001finite}, giving, 
\begin{equation}\label{eq:psi expan}
\psi_{\pm}=\sum_{j=1}^{M} c_{\pm, j} \eta_{j}.
\end{equation}
There are $M$ basis functions, where $M$ is the total number of points in the grid and $\eta_i$ the finite element basis functions (piecewise linear elements with local support) (see Appendix  and \cite{bylaska2009adaptive} for more detail).
\noindent
The variation of the functional (\ref{eq:LDA}) expanded in the basis as in (\ref{eq:psi expan}) leads to generalized eigenvalue problems which must be solved in a self-consistent fashion. 
These may be written as,
\begin{equation}\label{helm}
(-\frac{1}{2} \Delta -\epsilon_{\pm}) \psi_{\pm,i}(x)=V_{\mathrm{eff}, \pm}[\rho_{\pm}] \psi_{\pm}(x),
\end{equation}
{where the spin electron density is  $\rho_{\pm}=\lvert \psi_{\pm}\rvert^2$  and $V_{\mathrm{eff}, \pm}[\rho_{\pm}]$ denotes the effective (spin)-potential corresponding to $\rho_{\pm}$}.   {
Only the lowest spin up and spin down states are occupied and only these states are found in the solution method (Appendix), thus we only need the lowest eigenfunction in (\ref{helm}).}
The eigenvalue problems, (\ref{helm}), are solved using an iterative process in which for step $k$ the $\psi_{\pm}$ on right hand side of (\ref{helm}) and the orbital energies,  $\epsilon_{\pm}^k$, at step $k$ are assigned the values and functionality from the $k-1$ step (see Appendix and \cite{HoudongThesis}).

(\ref{helm}) is solved
using the FEniCS software package (see Appendix and \cite{HoudongThesis} for more detail). 
This package implements a  conjugate gradient solver (generalized minimal residual method, GMRES \cite{gmres})  after preconditioning with an algebraic multigrid
preconditioner  (AMG, BoomerAMG from the Hypre Library \cite{hypre,
briggs2000multigrid,Boomer2002,Tatebe93,golub}). The application of the AMG solver leads to a linear in basis set size to solution time numerical method  \cite{HoudongThesis}. 

{Initial guesses} for the molecular orbitals (MOs) for the FEM
solutions are necessary to start the iteration. Here we used the H atom
Slater Type Orbitals (STO-3G) generated from the NWChem data base
\cite{valiev2010nwchem,apra2020nwchem} to form molecular orbitals for all $\alpha$.
Given two STO-3G functions centered on the atom centers and designated
as $\phi_1$ and $\phi_2$, the initial unnormalized MOs for symmetric delocalized solutions are
$(\phi_1+\phi_2)/2$.  When localized solutions
are expected the initial functions are {taken to be the  STO-3G functions}  $\phi_1$ and $\phi_2$  localized on the different atomic centers, see
\cite{hehre1969self,gauss}. 

When $\alpha$ is very small (weak exchange), 
the final
solutions are always the {paramagnetic} delocalized states that converge to the same spatial dependance for spin up and spin down states (i.e., $\psi_+ = \psi_-$, {where these are the lowest energy solutions for each spin}). For very large $\alpha$ the lowest energy states may be strongly localized (i.e.,  the spin up and spin down single electron states are localized on different atomic centers). These
localized solutions may not be well approximated by the STO-3G initial
functions. However, we have not had problems with convergence of the
method {used and} described in the Appendix.




{
In summary, in our FEM formalism the forms of the spatial parts of the orbital wave functions are completely independent and the symmetry of the total density is not constrained. However, for most of the stationary solutions that we have found the total {electron density retains the symmetry} of the {\Hm} molecule. We have shown above this to be true for the $\rho$ calculated from the lowest energy solutions of (\ref{eq:LDA}) in both the large $\alpha$ and small $\alpha$ limits. However, for the lowest energy product state the symmetry of the spatial parts of the individual spin orbitals may be broken in a way that preserves the symmetry of the total density of the  molecule leading to localization of the electron spin. Additional higher energy numerical solutions have been identified which do not preserve the symmetry of the molecule (see Figure~\ref{alpha_bifur}). In applications of DFT to large molecules or condensed materials this spin localization is interpreted in term of the observed spin states of lattices (or molecules) (e.g. antiferromagnetic states in condensed materials \cite{Cox1992,Roll2004,Peng-Perdew2017}). }

\subsubsection{Bifurcation in the $R$ dimension}

The optimized total energy as the {\Hm} molecular bond, $2R$, is lengthened at fixed $\alpha= 0.93$ (see (\ref{eq:LDA}) 
similar to the value used in the application of SDFT to molecular and
condensed matter problems) is shown in Figure \ref{E_R}. The accuracies of the total energies calculated are within 0.02 au for the $H_2$ molecule in our calculations reported here, see \cite{HoudongThesis}.  Remarkably, 
for a given $\alpha$ and sufficiently small $R$, the independent solutions for
orbital wave functions $\psi_+$ and $\psi_-$ converge to the same
function even under full variation with no symmetry restriction. ({That is to say, there is NO symmetry breaking in the molecular orbitals.}) This
is consistent with the fixed $R$, small $\alpha$ analysis in Section
\ref{sec:small}. 
 In this region,  the  {\em restricted} DFT
(RDFT) solution in which $\psi_+$ and $\psi_-$ are taken to be the same function (double filling)
is the lowest energy solution to the optimization problem posed in
(\ref{eq:LDA}) even when each orbital function is varied
independently without constraint. Similar behavior is observed in the
Hartree or Hartree-Fock model of electronic structure for the two
electron system. These solutions are important because such doubly filled restricted
DFT solutions are widely assumed and used in quantum chemistry applications
\cite{gauss,Szabo1989}. 

As the {\Hm} bond length is extended as illustrated in Figure
\ref{E_R} ($ 2R \gtrapprox 2.45 \text{au}$) the solution bifurcates
creating two two-electron product (singlet determinant
\cite{Szabo1989,Parr-Yang}) solutions. In the lowest energy state
(lower branch, green line) one symmetry broken electron orbital (say
the spin up state) is localized around one site and the other orbital
function state (spin down) is localized around the second nuclear site (see the
green density distribution cartoon in the bottom right Figure \ref{E_R}).  The
product wave function (total density) for this branch leads to a spin
localized density distribution ({\em spin up and spin down electrons
  localized on different atomic sites with total spin zero and preserving the symmetry of the molecule}). This
spin distribution is consistent with an antiferromagnetic state for
the {\Hm} molecule.  Since spin ordered condensed systems are common
targets for DFT prediction, this is an important dimension for
variation in designing DFT representations of such systems
\cite{Cox1992,Roll2004,Peng-Perdew2017}.

The upper energy branch in Figure \ref{E_R} is a continuation of the
restricted solution in which the spin up and spin down orbitals have
the same spatial dependence (no localization, blue density
distribution bottom right Figure \ref{E_R}). We note that this
solution continues as a stationary solution even for large $R$.
{To better illustrate the structure of the solution as
  $R$ goes from the restricted region to the antiferromagnetic region
  we plot the spin-up density weight of $\psi_+$ defined as
\begin{equation}\label{WF weight}
w_+ = \int_{x_1 \geq 0} \abs{\psi_{+}(x)}^2 \ud x,
\end{equation}
where the integration is on the right half domain corresponding to
regions closer to one of the nuclei. This is the proportion of the
mass of $\psi_+$ localized near one of the nuclei (and $w_+ = 1/2$ if
no localization happens).  Here we have identified $\psi_+$ as the
spin function which after the bifurcation is more localized in the
positive $x_1$ region. We note the smooth behavior of the variation of
$w_+$ as the bond length enters the bifurcated region, observed in
Figure \ref{E_R_Bifur}. The plot further confirms the symmetry
bifurcation as $R$ increases. Note that the symmetry of total electron density is preserved in both the upper and lower states as in insert in  Figure \ref{E_R}.}

\begin{figure}[H]
\centering
\includegraphics[height=6.1cm]{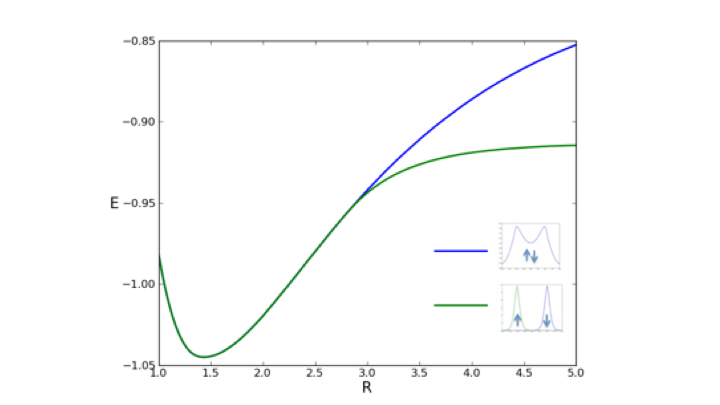}
\caption{Bifurcation of LSDA for {\Hm} in the $R$ dimension with $\alpha = 0.93$. The bifurcation point is in the region $2R=2.40$ to $2.50 \text{au}$. The energy difference between the two states in the region $2R=2.40$ to $2.50 \text{au}$ is of the order of $10^{-5} \text{au}$.}.
\label{E_R}
\end{figure}

\begin{figure}[H]
\centering
\includegraphics[height=6.1cm]{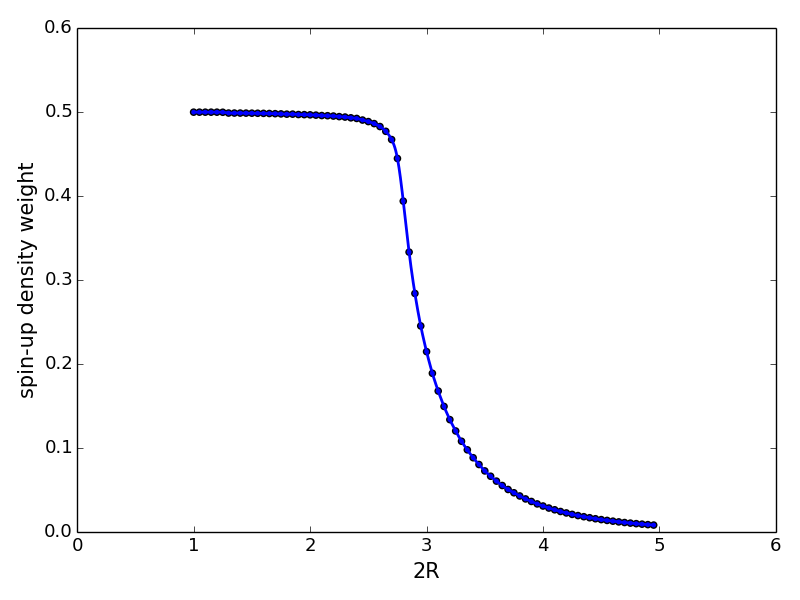}
\caption{Spin-up density weight, $w_+$ (defined in  \eqref{WF weight}) as a function of bond length}
\label{E_R_Bifur}
\end{figure}

\subsubsection{Hessian of bifurcated solutions for R variation}

{After the bifurcation with increasing $R$, Figure \ref{E_R}, 
there are two antisymmetric solutions to the  electron mean field problem  (one spin localized/antiferromagnetic (green line) and one restricted (no localized spin, blue line)).  
This is similar to the application of DFT methods to real systems (e.g., magnetic materials) in which  several solutions (spin orderings) may be found as stationary \cite{kubicki,Roll2004,Peng-Perdew2017}.  These states are frequently interpreted in terms of the relative spin ordering of phases of different structure with apparent reliability. These calculation produce results which correlate well with experimental observations in \cite{kubicki,Roll2004,Peng-Perdew2017}. However, in a realistically sized calculation it can be difficult to identify the minimum energy structure on the basis of currently used optimization methods \cite{kubicki,Roll2004}.  (For a brief overview of how spin is controlled in condensed matter calculations, see reference \cite{kubicki}.)

 The stability/metastability of the solutions to the {\Hm} problem along the 2 branches In Figure (\ref{E_R}) can be determined by analyzing the eigenvalues of the Hessian associated with the optimization problem (\ref{eq:LDA}) (see the Appendix). 
For stationary solutions the gradient of the total energy  (constrained to have the proper normalization) must be zero for any dissent direction.
For stable stationary solutions all eigenvalues of the Hessian (see the Appendix) must be positive. If a solution is unstable there will be at least one negative eigenvalue of the Hessian. 
At the bifurcation point, there will be a zero eigenvalue.}

 Numerical estimates of the eigenvalues of the Hessian the optimization problem (\ref{eq:LDA}) (also calculated via the FEM see appendix) are reported in Table \ref{tab:hess}. These show there is one negative eigenvalue for the delocalized RDFT solution  (green line Figure \ref{E_R}) beyond the bifurcation point. The combination of the zero gradient and the presence of the single negative eigenvalue shows that this is a metastable point in the energy surface.  

\begin{small}
\begin{table}[H]
\begin{tabular}{|c|c|c|c|c|c|}
\hline
$\text{Bond Length}$ & $\text{Solution}$ & $\text{Result}$ & $\text{Details}$ \\
\hline
2.0 a.u. & delocalized& Local Minimizer &  all eigenvalues on the constraint manifold $>0$. \\
\hline
3.5+ a.u. & delocalized  & Saddle Point &  1 eigenvalue on the constraint manifold $<0$. \\
\hline
3.5+ a.u. & localized & Local Minimizer &  all eigenvalues on the constraint manifold $>0$. \\
\hline
\end{tabular}
\caption{Eigenvalues for the Hessian matrix, {$\alpha = 0.93$} for various bond lengths}
\label{tab:hess}
\end{table}
\end{small}
~~

\subsubsection{Bifurcation in the $\alpha$ dimension}~~~

We demonstrate here the numerical verification of the results in Theorems \ref{thm1} and \ref{thm2}.  Adjustment of parameters such as the strength of exchange, $\alpha$, in  \eqref{eq:LDA} in the density functional formalism is sometimes used to improve DFT model performance for spin ordered systems \cite{Henkelman2011,Roll2004}. In the {\Hm}  problem discussed here the parameters $2R$ and $\alpha$ control the bifurcation. For a given $R$ the strength of the exchange term determines the bifurcation point. Figure \ref{alpha_bifur} shows the symmetry breaking bifurcation points for LSDA solutions of (\ref{eq:LDA}) with strength of the exchange contribution for fixed bond lengths $2R=2.0 \text{au}$. 

In the small $\alpha$ setting, there are two identical degenerate spatial solutions (for spin up and spin down). These solutions (delocalized solutions) have peaks at the two atom centers, spread over the whole molecule and have the symmetry of the molecule.  In this region, if numerical solutions are initiated with broken symmetry the $\psi_+$ and $\psi_-$ solutions evolve to have the same spatial dependance,  i.e., $\psi_+$ = $\psi_-$. These solution are equivalent to the single orbital solution of the restricted or doubly filled DFT product function. 

See the analysis in Section \ref{sec:small} for the demonstration of this result, but the underlying reason is that the Coulomb repulsion is somewhat insensitive to the localization of the total density and the kinetic energy dominates over the exchange potential contribution in (\ref{eq:LDA}). Beyond the bifurcation point (as illustrated in Figure \ref{E_R}, the broken symmetry  solutions with excess spin concentrate on each atom (localized  solutions) appears and the product solution {with equivalent spin localization on each site} is the global minimizer. The total density still has the symmetry of the molecule.  The restricted solutions with higher energy are still stationary along the upper branch of the bifurcation curve. These solutions have not been discussed {in our analysis}.

As $\alpha$ is further increased (at constant $R$) a variety of new bifurcations appear. The exchange potential contributes much more than the Coulomb potential so the solution tends to be localized instead of delocalized. 
 We note that the antiferromagnetic solution (blue line) is the global solution for all large $\alpha$. This is an important result since this is the solution generally associated with magnetic behavior in real materials.The symmetric delocalized solution (dark green and light green lines in Figure \ref{alpha_bifur}) is the highest energy. For very high $\alpha$ the maximum density moves to the middle of the bond.  For $\alpha > 6$  the high energy delocalized solutions break symmetry and forms two stable lower energy two electron solutions centered on the atom centers (red line in Figure \ref{alpha_bifur}). 

\begin{figure}[H]
\centering
\includegraphics[height=6.1cm]{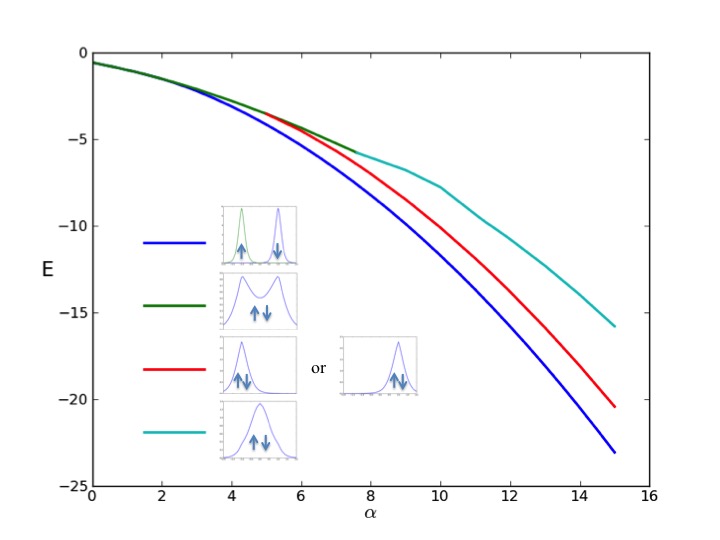} 
\caption{Several numerically constructed branches of the
  bifurcation of LSDA for {\Hm} in the $\alpha$ parameter with $R=2.0 \text{au}$
  showing the relevant $\psi_{\pm}$ profiles.}
\label{alpha_bifur}
\end{figure}

\begin{figure}[H]
\centering
\includegraphics[height=6.1cm]{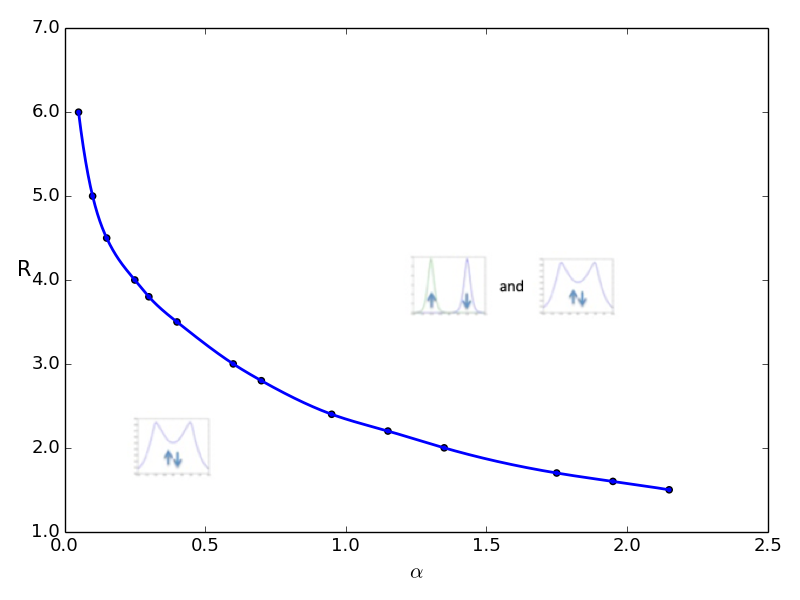}
\caption{The first symmetry
  breaking bifurcation of LSDA for {\Hm} as a phase diagram in $R$
  and $\alpha$. Below the line only delocalized states are present while above the line there are both delocalized and localized states.}
\label{R_alpha}
\end{figure}
{
For applications to real materials the bifurcation in $R$ for a fixed $\alpha$ is taken as the onset of magnetic behavior. The bond length
corresponding to beginning of antiferromagnetic behavior (spin
localization) occurs after the first bifurcation. In Figure \ref{R_alpha} we show the variation of the first $R$ bifurcation point for
different $\alpha$. 
For $\alpha$  in the region commonly used the bond length for bifurcation is quite sensitive to the strength of exchange.}

\section{Discussion and future works}\label{sec:conclusion}

A similar analysis to the large $\alpha$ case gives that the ground state Euler-Lagrange equation for large $R$ can be transformed to an equation of the form 
{ 
\begin{equation}
\label{e:largeR}
-\Delta \psi_{\pm}(x)  + R^2 E (R)\psi_{\pm}(x)  - R V_1(x) \psi_{\pm}(x) + R \int \frac{|\psi_+|^2 (y)+ | \psi_-|^2 (y)}{|x-y|} dy \psi_\pm (x) + R \alpha | \psi_\pm(x)|^{\frac23} \psi_\pm(x) = 0,
\end{equation}
taking $\psi_\pm  (x) = R^{\frac32} \psi_\pm (Rx) $. For $R \gg 1$, this is related to a new problem with large Coulomb repulsion,} large but unit distance apart nuclear masses $Z = R$, and a strong exchange-correlation nonlinearity $R \alpha$.  Thus, the main issue is to study the nature of the stable curve for a large nuclear mass with strong exchange-correlation nonlinearity and observe what the nature of the Lagrange multiplier $R^2 E (R)$ should be as $R \to \infty$.  The intuition is that this scales the problem to be localized since moving along the stable branch of states from low electron mass (small Lagrange multiplier) for the potential $V_1$ to large electron mass (large Lagrange multiplier) eventually concentrates onto localized states over each well.  {This suggests that we consider a modified  Lagrangian with critical points given by
\begin{equation}
\label{e:largeR1}
-\epsilon^2 \Delta \psi_{\pm} (x) + \psi_{\pm} - V_1 \psi_{\pm} + \int \frac{|\psi_+|^2 (y) + | \psi_-|^2 (y) }{|x-y|} dy \psi_\pm (x) +  \alpha | \psi_\pm (x)|^{\frac23} \psi_\pm (x) = 0,
\end{equation}
where the small parameter $\epsilon = 1 / \sqrt{R}$. This looks like a
Ginzburg-Landau type singular-perturbation.} As a result, this
motivates the following question for a (strange) Hydrogen model
{
  (it is strange since a Coulomb self-repulsion and an
  exchange energy for a single electron are included)}: Is the
minimizer of
\begin{equation}
\label{eq:Hydweird}
E_{H}(u) = \frac{1}{2} \int \abs{\nabla
 u}^2 \ud x - \int \frac{Z}{|x|} |u|^2 (x) \ud x 
+ \frac{1}{2} \iint \frac{|u|^2 (x) |u|^2 (y)}{\abs{x-y}} \ud x \ud y - \int |u|^{8/3} dx
\end{equation}
such that $\| u \|_{L^2} = 1$
orbitally stable.  This has been answered in some sense when $Z = 0$ in \cite{ruiz2010schrodinger} when the mass is that of the absolute minimizer. Understanding what occurs for the natural electronic mass $1$ requires further investigation of this model and will be a topic of future work.

\subsection*{Acknowledgements}

We thank the anonymous referee for a very careful reading of this work and making many useful suggestions both to improve the exposition and clarify several points in the analysis.  The work of J.L. is partially supported by the National
  Science Foundation under grants DMS-1312659 and DMS-1454939 and by
  the Alfred P. Sloan Foundation. J.L.M. was supported in part by NSF
  Applied Math Grant DMS-1312874 and NSF CAREER Grant DMS-1352353.

\appendix 

\section{Appendix}~~

\label{sec:numerics}

\begin{appendix}
We recall here the basics of the finite element methods we use in this
work to numerically find the critical points of the XLDA Lagrangian.  The
numerical algorithms are implemented using a {\it python}
implementation of the FENICS finite element package,
\cite{dupont2003fenics}.  Many of the tools we use here are discussed in more detail in
the references \cite{bylaska2009adaptive,HoudongThesis}.  For complete discussions of Finite
Element Methods, see the books of
\cite{axelsson2001finite,bank1981optimal,braess2001finite,brenner2008mathematical}.  The method we develop here takes advantage of the sparsity of the FEM representation of the eigenvalue problem leading to an algorithm that scales linearly with number of basis functions. For
resources on large scale computing in computational chemistry, see
\cite{kendall2000high,valiev2010nwchem}.

\subsection*{I. The Finite Element Set Up} ~~

\begin{figure}[H]
\centering
\includegraphics[scale=.85]{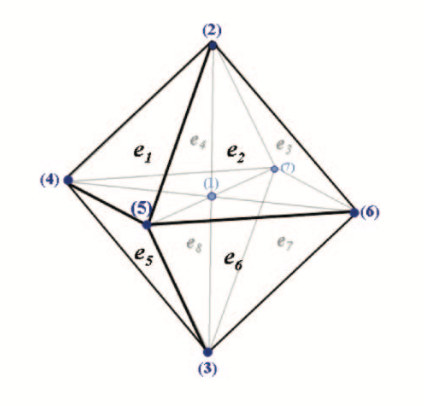}
\caption{Finite element tetrahedron defining FEM elements and
  nodes. The $L$ tetrahedral elements are identified by $e_l$. The
  global node are indexes by $(m)$. For each element $e_l$ a there is
  global node at each corner. Generally local nodes belonging to
  individual tetrahedra are also defined.  See
  \cite{brenner2008mathematical,axelsson2001finite,braess2001finite,bylaska2009adaptive}.
  These are not shown and are managed transparently by the Dolfin
  software \cite{dolfin}. }
\label{fig:tedrahedron.png}
\end{figure}

We assume that the solution, $\psi_{\pm}$   exists in a bounded domain $\Omega \in \mathscr{R}^3$ that can be divided into a set of $L$ non overlapping tetrahedral elements, $\{e_l\}_{l=1}^L$ \cite{brenner2008mathematical,axelsson2001finite,braess2001finite,bylaska2009adaptive}, see Figures~\ref{fig:tedrahedron.png} and \ref{fig:grid}. 
For the electronic structure problems we are concerned with  the atomic potentials represented by  $V(x)$ in the Hamiltonian below, (\ref{eff H FEM}), are singular. This leads to rapid variation of the solution to the eigenvalue problem in this region. In order to obtain accuracy the  FEM grid in this region must have a finer resolution as illustrated in Figure \ref{fig:grid} and discussed in \cite{weare1995parallel,weare1997adaptive,bylaska2009adaptive}.

To construct the grid used in the calculation, we:

1. Use BoxMesh \cite{fenics2012}  to generate a coarse mesh in a $50\times50\times50$ domain. The initial number of cells in each direction is $2$. So the total number of tetrahedra will be $48$ and the total number of vertices will be $27$ in the coarse mesh.

2. Find the closest mesh grids to the nuclei and set the parameter {\em cell\_marker} \cite{fenics2012} 
 true {that tells the code to refine the mesh. If {\em cell\_marker}  = false, it means this mesh will NOT be refined}.

3. Refine the whole grid for $3$ cycles.

\noindent
Generally FEM nodes are located at corners, along boundaries or in the centers of tetrahedral regions \cite{brenner2008mathematical,axelsson2001finite,braess2001finite,githubFEM}. For the calculations here the nodes are located only at the corners of the tetrahedra. These nodes are shared by adjacent tetrahedra as in Figure \ref{fig:tedrahedron.png}. 
Each tetrahedron $l$ has four corner nodes. A global index identifies a node as in Figure \ref{fig:tedrahedron.png} (global node numbers in brackets). There are $M$ global nodes in the construction. In actual calculations a local node index identifying a corner global node with a basis function inside a  particular tetrahedral is also defined in Dolfin \cite{githubFEM,axelsson2001finite,braess2001finite,brenner2008mathematical,bylaska2009adaptive} to identify variation associated with a node within a particular tetrahedron \cite{githubFEM,axelsson2001finite,braess2001finite,brenner2008mathematical}. The somewhat difficult book keeping problem of keeping track of the global variation of the basis functions consistent with their local behavior is taken care of nicely in the FEniCS software, see \cite{fenics2012}.
 
For each node (with global index m and local index i) in a tetrahedral element, $l$, a finite element basis functions $\{\chi^{e_l}_i\}$ is defined. In these calculations the $\{\chi^{e_l}_i\}$ are linear functions centered on local nodes $i$ in element $e_l$ \cite{githubFEM,bylaska2009adaptive,brenner2008mathematical,axelsson2001finite,braess2001finite}. For each global node $m$ the linear basis function $\{\chi^{e_l}_i\}$ is 1 on global node $i$ and zero on all other nodes contained in the tetrahedral elements containing global node $m$. For a particular tetrahedron the linear basis associated with local node $i$ of tetrahedral $e_l$, $\{\chi^{e_l}_i\}$, has value only in tetrahedra $e_l$. Illustrations of how this works are given in \cite{githubFEM}. The local node functions $\{\chi^{e_l}_i\}$ can be assembled in functions centered around the global nodes with index $m$ as the global basis functions $\eta_m(x)$.

A piecewise continuous function (here the approximated
$\psi_{\pm}(x)$) can now be expanded as in \cite{bylaska2009adaptive},
\begin{equation}
  \psi_{\pm}(x)=\sum\limits_{m=1}^M c_{\pm,m}\eta_{m}(x).
\end{equation}
Here, $M$ is the dimension of space of global nodes and $c_{\pm,m}$ is the coefficient of basis element $\eta_m$. The value of the $\psi_{\pm}$ on node $m$ is $c_{\pm,m}$. 
 
\subsection*{II. The Generalized Eigenvalue Problem:}

With the above formulation, solving the Kohn-Sham minimization problem
related to (\ref{eq:LDA}) leads to the generalized eigenvalue
problem ({in many of the following equations the $\pm$
  ($+$ spin up, $-$ spin down) notation has been suppressed to keep
  notation simple}),
\begin{equation}\label{eq:Hc}
\mathbf{H}\mathbf{c}=\mathbf{\epsilon}\mathbf{S}\mathbf{c},
\end{equation}
or 
\begin{equation}
\sum_n H_{mn}c_{n,k}=\epsilon_k \sum_n S_{mn}c_{n,k}
\end{equation}
where $k$ identifies the $k^{th}$ eigenfunction,
\begin{equation}\label{eq:Smn}
S_{mn}=\int\limits_\Omega dx \eta_m(x)\eta_n(x),
\end{equation}  
and
\begin{equation}\label{eq:Hmn}
H_{mn}=\frac{1}{2}\int\limits_\Omega dx\nabla \eta_m(x) \nabla\eta_n(x) +\int\limits_\Omega dx \eta_m(x) V_{\text{eff}} \eta_n(x)
\end {equation}
\noindent
with $V_{\text{eff}}$ given by
\noindent
\begin{equation}\label{eff H FEM}
{V_{\rm eff}=V_{\rm ext}(x)+V_{\rm ee}(\rho)+V_{\rm ex}(\rho,\alpha)= V_{\rm ext}(x)+\int \frac{\rho(x')}{|x-x'|} dx'+V_{\rm ex}(\rho_{\pm},\alpha).}
\end{equation}
\noindent
Here $\rho(x)$ is the total electron density, {and $\rho_\pm$ is the spin density, and $V_{\mathrm{ex}, \pm}(\rho_{\pm},\alpha)$ is given by the scaled Dirac form  }
\begin{equation}\label{eq:mod Dir}
V_{{\rm ex}, \pm}(\rho,\alpha)=\alpha \rho_{\pm}^{1/3}.
\end{equation}
{
  Note that in the spin DFT, the exchange potential depends on the spin component, and thus the effective Hamiltonian for the spin-up and spin-down orbitals are different; while the structure of the problem is the same, and hence we keep the notation (e.g., for $\mathbf{H}$ and $\mathbf{S}$) independent of spin component.}
The matrix $H_{mn}$, (\ref{eq:Hmn}), the overlap matrix $S_{mn}$,
(\ref{eq:Smn}) and integrals over $V(x)$ in (\ref{eff H FEM})
can be obtained from the FEniCS software \cite{fenics2012}. The
calculation of these matrix are also carefully discussed for the
electronic structure problem in \cite{bylaska2009adaptive} and in
general in
\cite{brenner2008mathematical,axelsson2001finite,braess2001finite}.
The full potential $V_{{\rm eff}}$ given by (\ref{eff H
  FEM}) is a function of the density requiring that the eigenvalue
problem, (\ref{eq:Hc}), be solved iteratively until self
consistency is achieved. {We are only interested in the lowest $\pm$ energy solutions, though the methods can be modified to higher energy states as well.}

\subsection*{III. Solution to the generalized eigenvalue problem and the associate Coulomb problem}~~

The objective of the calculation is the solution of the generalized eigenvalue problem,  (\ref{eq:Hc}). However, this requires the input of a current estimate of the Classical potential $V_{\rm ee}$ required in $V_{\text{eff}}$, (\ref{eff H FEM}). This may be found as the solution to the Poisson equation
\begin{equation}\label{eq.poi}
\Delta V_{\rm ee} =-4\pi\rho=-4\pi
\left[\abs{\psi_{+}}^2+ \abs{\psi_{-}}^2\right]
\end{equation}
To solve this PDE $V_{\rm ee}$ is also expanded in the finite element basis as,
\begin{equation}
V_{\rm ee}(x)=\sum\limits_{m=1}^Mv_m\eta_m(x).
\end{equation}
\noindent
(\ref{eq.poi}) is then represented by a system of linear equations giving the $\{v_m\}$.
Given a solution to the Coulomb problem, (\ref{eq.poi}), based on a current density for the iteration, the generalized eigenvalue problem, (\ref{eq:Hc}) is also solved in the $\{\eta_m\}_{m=1}^{M}$  finite element basis functions. Each molecular orbital is represented as
\begin{equation}
\psi_{\pm}=\sum_{\alpha=1}^{M} c_{\pm,m}\eta_{m}.
\end{equation}
Note, for this problem there is only one filled molecular orbital for each spin.
The finite element discretization of the one-electron equation for the
current iteration is given as, for $i = \pm$
\begin{equation}
\label{eq:helm}
(T_i-\epsilon_{i})c_{i}=v_{i}.
\end{equation}
$c_{i}=\{c_{i1},\ldots,c_{im}\}$ are the coefficients of molecular orbital in the expansions of finite element basis. The elements of the operator $(T_i-\epsilon_{i})$, are,
\begin{equation}\label{eq:helm1}
(T_i-\epsilon_{i})_{mn}=\int_{\Omega}\left\{\frac{1}{2}\nabla\eta_{m}\nabla\eta_{n}-\epsilon_{i}\eta_{m}\eta_{n}\right\} d x.
\end{equation}
 The elements of $v_{i}$ are given by 
\begin{equation}
(v_{i})_{m}=\int \eta_m(x) V_{\text{eff}, i}(x)\psi_{i}(x) dx
\end{equation}
are calculated in an iterative process in which $V_{\text{eff}, i}(x)$ is defined for step $k$ from the results of the self-consistent solver in the prior iteration. 

\subsubsection*{Details of the FEniCS solver}~~

The AMG solver is based on a V-cycle \cite{briggs2000multigrid,HoudongThesis}  with a maximum number of multi grid
levels of 25. For each fine to course grid transfer a single pre
smoothing step is taken. For each course to fine transfer a single post smoothing step is taken. These smoothing steps use a symmetric-SOR/Jacobi method. On the coarsest level the
course FEM equation is relaxed by Gaussian elimination. 
In the iteration an energy correction step is applied to update new eigenvalues
after the Helmholtz equation is solved for a set of $\epsilon^{(k)}$
from the prior AMGCG cycle. 
The self-consistent solver converges when the
total energy difference in two consecutive iterations is smaller than a selected tolerance. These are solved via the FEniCs code using the GMRES \cite{gmres} and BOOMER AMG (Algebraic Multigrid \cite{briggs2000multigrid}) packages. The solution to this problem is of order $M$ \cite{HoudongThesis}. 
To improve the convergence of the solution a preconditioning based on the algebraic multigrid method is used  \cite{Tatebe93,Boomer2002}. For an introduction to multigrid methods and their application to problems in electronic structure, see \cite{weare1995parallel,weare1997adaptive,briggs2000multigrid,bramble1993multigrid,bramble1987new,brandt1986algebraic,brandt1985algebraic,hackbusch1985multi,harrison2004multiresolution,mccormick1987multigrid}.

\subsection*{IV. The Wavefunction and Orbital Energy update and iteration}~~
\subsubsection*{IV.i: Some Preliminaries}

The eigenvalue problem, (\ref{eq:helm}), is solved using an iterative process in which for step $k$ the $\psi_{\pm}$ on right hand side of (\ref{eq:helm}) and the orbital energies,  $\epsilon_{\pm}^{(k)}$, at step $k$ are assigned the values and functionality from the $k-1$ step. 
The iteration is developed with the intention of producing linear scaling in the number of FEM basis functions, $M$.  
This is achieved by developing a solver strategy that emphasizes the use of the operator $[(\nabla^2-k^2)]^{-1}$ which is efficiently implemented in multigrid schemes.

The density functional equations  leading to the values of $\epsilon_i^k$ and $\psi_i^k$ for the $k$ values (update from $\psi_i^{(k-1)}$ to $\psi_i^{k}$) are written as

\begin{equation}\label{eq:HF}
{\left[-\frac{1}{2}\nabla_{x}^2 -\epsilon_i^{(k-1)} \right]\psi_i^{k}(x)= \left[V_{\rm ext}^{(k-1)}(x) + V^{(k-1)}_{\rm ee}(\rho^{(k-1)}(x)) + V_{\rm ex}^{(k-1)}(\rho^{(k-1)}(x) ) \right] \psi_i^{(k-1)}(x)}
\end{equation}
where $V_{\rm ext}$ is the external potential from (\ref{eq:defVR}).
The  electron-electron Coulomb potential (calculated from FeniCS as above) is given by,
\begin{equation}\label{eq.poi k-1}
\nabla^2V^{(k-1)}_{\rm ee}(x)=-4\pi\rho^{(k-1)}(x)=-4\pi
\left[\abs{\psi^{(k-1)}_{+}}^2+ \abs{\psi^{(k-1)}_{-}}^2\right].
\end{equation}
The exchange potential is given by,
\begin{equation}\label{eq:Exc}
V^{(k-1)}_{\rm ex}(x)=\alpha \rho^{(k-1)}(x)^{\frac{1}{3}} .
\end{equation}


\noindent
This is now a linear PDE of the form,
\begin{equation}\label{eq:ite dif}
\left[-\frac{1}{2}\nabla^2 -\epsilon^{(k-1)}_i \right]\psi^k_i(x) =f^{(k-1)}_i(x).
\end{equation}
\noindent
Note that all the potential terms in (\ref{eq:HF}) have been collected 
in the function $f^{(k-1)}_i$. We calculate the solution to (\ref{eq:ite dif}) using an efficient multigrid method. Because of the complexity of the grid we use the  AMGCG implemented in the FEniCS software \cite{fenics2012}.


\subsubsection*{IV.ii: Update of the wavefunction, from $\psi^{(k-1)}(x)$ to $\psi^{k}(x)$:}

To initiate the $k^{\rm th}$ iteration ($\psi_i^{(k-1)}$ to $\psi_i^k$) we assume we have solutions $\psi_i^{(k-1)}(x)$ and $\epsilon_i^{(k-1)}$. 
The update of the wavefunction proceeds directly from (\ref{eq:ite dif}) as
\begin{equation}\label{:eq Fock Prop 2}
\psi^{k}_i(x) =\left[-\frac{1}{2}\nabla_{x}^2 -\epsilon^{(k-1)}_i \right] ^{-1}f^{(k-1)}_i(x).
\end{equation}
\noindent
All the functions in this equation are defined from the solution that we obtain from AMGCG.

To complete the iteration cycle we also need an update of the orbital energy $\epsilon_i^{(k-1)}$ to $\epsilon_i^{k}$.
\bigskip

\subsubsection*{IV.iii: Update of the orbital energy:}~~~

  We assume we have $\psi_i^{(k-1)}(x)$ and $\epsilon_i^{(k-1)}$. 
and begin by defining two Greens functions:

The {$(k-1)^{\rm th}$} Green's  function, $G^{(k-1)}_i$,  with energy $\epsilon^{(k-1)}$,
\begin{equation}\label{eq:G(k-1)}
G^{(k-1)}_i=\biggl\{-\frac{1}{2}\nabla^2-\epsilon_i^{(k-1)}\biggr\}^{-1}
\end{equation}
\noindent
and a Green's function, $G^{\rm con}_i$ with the converged DFT orbital energy (from converged solution to DFT equations), $\epsilon_i^{\rm con}$. This is given by,
\begin{equation}\label{gcon}
G^{\rm con}_i=\biggl\{-\frac{1}{2}\nabla^2-\epsilon^{\rm con}_i\biggr\}^{-1}.
\end{equation}


 In the iteration, the updated $\epsilon^{k}_i$, given by
 \begin{equation}\label{updatE}
 \epsilon_i^{k}=\epsilon_i^{(k-1)} +\delta\epsilon_i^{k}.
 \end{equation}
 is taken to be a good approximation to $\epsilon^{\rm con}$.
Using this in (\ref{gcon}) we have,
\begin{equation}\label{eq;G(full e)}
G^{\rm con}_i=\biggl\{-\frac{1}{2}\nabla^2-\epsilon^{\rm con}_i\biggr\}^{-1}=\biggl\{\frac{1}{2}\nabla^2-(\epsilon_i^{(k-1)} + \delta\epsilon^{k}_i)\biggr\}^{-1}
\end{equation}
\noindent
the objective is to calculate an orbital energy correction from these equations using $\psi_i^{(k-1)}$. 

The function $\psi^{\rm con}_i$ satisfies the  orbital PDE, 
\begin{equation}\label{Eq: Fock orbit eq}
\psi_i^{\rm con}(x)=\biggl\{-\frac{1}{2}\nabla^2-\epsilon^{\rm con}_i\biggr\}^{-1}f^{\rm con}_i(x).
\end{equation}
In this equation $\epsilon_i^{\rm con}$ is the converged orbital energy. 

\bigskip
\noindent
We assume that $\psi_i^{(k-1)}$ is a good approximation to $\psi^{\rm con}_i$, i.e., that it approximately satisfies
\begin{equation}\label{eq:full phi(k-1)}
\psi^{(k-1)}_i(x)=\biggl\{-\frac{1}{2}\nabla^2-\epsilon^{\rm con}_i\biggr\}^{-1}f_i^{(k-1)}(x)=\biggl\{-\frac{1}{2}\nabla^2-(\epsilon_i^{(k-1)} + \delta\epsilon^{k}_i) \biggr\}^{-1}f_i^{(k-1)}(x).
\end{equation}

\noindent
Now we  expand the full Green's function (RHS) in the energy correction $\delta\epsilon^{k}_i$ to obtain an equation that will update the orbital energy (find a correction to $\epsilon^{(k-1)}_i$).

We use the operator identity,
\begin{equation}\label{eq:ex G func2}
\frac{1}{(1+a+b)}=\frac{1}{(1+a)} + \frac{1}{(1+a)}\frac{-b}{(1+a+b)},
\end{equation}
to obtain
\begin{equation}\label{eq:prop ex 3}
\begin{split}
\biggl\{-\frac{1}{2}\nabla^2 -\epsilon^{(k-1)}_i -\delta\epsilon_i^{k}\biggr\} ^{-1}=&\biggl\{-\frac{1}{2}\nabla^2 -\epsilon^{(k-1)}_i \biggr\} ^{-1}-\biggl[\biggl\{-\frac{1}{2}\nabla^2 -\epsilon^{(k-1)}_i \biggr\} ^{-1}    \\
&\biggl\{-\delta \epsilon_i^{k}\biggr\}\biggl\{-\frac{1}{2}\nabla^2 -\epsilon^{(k-1)}_i -\delta\epsilon_i^{k}\biggr\} ^{-1}\biggr].
\end{split}
\end{equation}
Iteration of this equation leads to an expression for the propagator to 1st order in $\delta\epsilon^{k}_i$ as,
\begin{equation}\label{eq:prop ex 4}
\begin{split}
\biggl\{-\frac{1}{2}\nabla^2 -\epsilon^{(k-1)}_i -\delta\epsilon_i^{k}\biggr\} ^{-1}=&\biggl\{-\frac{1}{2}\nabla^2 -\epsilon^{(k-1)}_i \biggr\} ^{-1}-\biggl[\biggl\{-\frac{1}{2}\nabla^2 -\epsilon^{(k-1)}_i \biggr\} ^{-1}    \\
& \times \biggl\{\delta \epsilon_i^{k}\biggr\}\biggl\{-\frac{1}{2}\nabla^2 -\epsilon^{(k-1)}_i \biggr\} ^{-1}\biggr].
\end{split}
\end{equation}
We can use this result in (\ref{eq:full phi(k-1)}) to give,

 \begin{equation}\label{eq:lin ex of phi}
 \begin{split}
 \psi^{(k-1)}_i(x)=&\biggl\{-\frac{1}{2}\nabla^2-\epsilon_i^{(k-1)}\biggr\}^{-1}f^{(k-1)}_i(x) \\
 &-\biggl\{-\frac{1}{2}\nabla^2-\epsilon_i^{(k-1)}\biggr\}^{-1}\delta\epsilon_i^{k}\biggl\{-\frac{1}{2}\nabla^2-\epsilon_i^{(k-1)}\biggr\}^{-1}f^{(k-1)}_i(x).
 \end{split}
 \end{equation}
 
 \noindent
 This is more conveniently written in vector notation 
 \cite{claude91} as,  
  \begin{equation}\label{eq:lin ex of phi1}
 \begin{split}
\Ket{\psi^{(k-1)}_i}=&\biggl\{-\frac{1}{2}\nabla^2-\epsilon_i^{(k-1)}\biggr\}^{-1} \Ket{f^{(k-1)}_i}    \\
 &-\biggl\{-\frac{1}{2}\nabla^2-\epsilon_i^{(k-1)}\biggr\}^{-1}\delta\epsilon^{k}_i\biggl\{-\frac{1}{2}\nabla^2-\epsilon_i^{(k-1)}\biggr\}^{-1}\ket{f^{(k-1)}_i}
 \end{split}
 \end{equation}
 or
 \begin{equation}\label{eq:lin ex of phi2}
 \begin{split}
0=&-\Ket{\psi^{(k-1)}_i}+\biggl\{-\frac{1}{2}\nabla^2-\epsilon_i^{(k-1)}\biggr\}^{-1} \ket{f^{(k-1)}_i}    \\
 &-\biggl\{-\frac{1}{2}\nabla^2-\epsilon_i^{(k-1)}\biggr\}^{-1}\delta\epsilon^{k}_i\biggl\{-\frac{1}{2}\nabla^2-\epsilon_i^{(k-1)}\biggr\}^{-1}\ket{f^{(k-1)}_i}.
 \end{split}
 \end{equation}
 \noindent
Closing this equation on the left with $\Bra{f^{(k-1)}}$ gives a linear expression for $\delta\epsilon^{k}_i$ which may be in terms of the $\psi^{k}_i$, (\ref{:eq Fock Prop 2}) , as,
 \begin{equation}\label{eq:energy update1}
 \begin{split}
0=&-\Braket{f_i^{(k-1)}|\psi^{(k-1)}_i}+\Braket{f_i^{(k-1)}|\psi_i^{k}}    \\
&-\delta\epsilon^{k}_i \Braket{\psi^{k}_i|\psi^{k}_i}.
 \end{split}
 \end{equation} 
 This may be solved for $\delta\epsilon^{k}_i$ to obtain
 \begin{equation}\label{eq:energy update2}
 \delta\epsilon^{k}_i=\frac{-\Braket{f^{(k-1)}_i|\psi^{(k-1)}_i}+\Braket{f_i^{(k-1)}|\psi^{k}_i}}{\Braket{\psi^{k}_i|\psi^{k}_i} }.
 \end{equation}
 This gives the update to $\epsilon_i^{(k-1)}$  via (\ref{updatE}) to complete the $k^{\text{th}}$ iteration.

\subsection*{V. The Self Consistent Iteration}~~

Algorithm~\ref{algo:4.1}  summarizes the process followed by the self-consistent solver. An initial guess $(c_{i}^{0},\epsilon_{i}^{0}), i=1,\ldots,n$ is given to start the self-consistent iterations. The solver  stops when the total energy difference in two consecutive iterations is smaller than the tolerance TOL. 

\begin{algorithm}[H]
\caption{The Self-consistent Iteration}
\label{algo:4.1}
\begin{algorithmic}                    
    \STATE \textbf{Input} $(c_{i}^{0},\epsilon_{i}^{0}), i=1,\ldots,n$, TOL;
    \WHILE{$\|\epsilon_{\rm total}^{k}-\epsilon_{\rm total}^{(k-1)}\|>$TOL}
	\STATE Evaluate potentials $V^{k}_{ij}, i,j=1,\ldots,n$ ;
        \STATE Evaluate $v^{k}_{i}, i=1,\ldots,n$ ;
	\STATE Solve the Helmholtz equation, and get updated $\{c_{i}^{(k+1)}, i=1,\ldots,n\}$;
	\STATE energy correction step, and get updated $\{\epsilon_{i}^{(k+1)}, i=1,\ldots,n\}$;
	\STATE $k$++;
    \ENDWHILE
    \STATE \textbf{Output} $(c_{i},\epsilon_{i}), i=1,\ldots,n$.
\end{algorithmic}
\end{algorithm}

\subsection*{VI. Hessian analysis}~~

The model we investigate in this work is the local spin density approximation  (LDA) (without correlation energy contributions).  As above the ground singlet state spin unrestricted density functional theory for this two-electron system defines two orbital wave-functions $(\psi_+,\psi_-)$.  The total energy functional is $E(\psi)$ is 

\begin{multline}
  E_{\alpha}(\psi_+, \psi_-) = \frac{1}{2} \int \abs{\nabla \psi_+}^2
  \ud x + \frac{1}{2} \int \abs{\nabla \psi_-}^2 \ud x
  +   \int V_R(x)\left(\abs{\psi_+(x)}^2+\abs{\psi_-(x)}^2\right)  \ud x \\
  + \frac{1}{2} \iint \cfrac{\left(\abs{\psi_+(x)}^2+\abs{\psi_-(x)}^2\right)\left(\abs{\psi_+(y)}^2+\abs{\psi_-(y)}^2\right)}{\abs{x-y}} \ud x \ud y -
  \alpha \int \left( \abs{\psi_+(x)}^{8/3} + \abs{\psi_-(x)}^{8/3} \right)
  \ud x,
\end{multline}
where $V_R(x)$ is the nuclear potential. The constraints on $(\psi_+,\psi_-)$ are
\begin{equation}
\label{eqn:cons}
\int \abs{\psi_{i}(x)}^2 \ud x=1,i=+,-  .
\end{equation}

We define the Lagrangian as,
\begin{multline}
L(\psi_+,\psi_-,\epsilon_+,\epsilon_-)=\frac{1}{2} \int \abs{\nabla \psi_+}^2
  \ud x + \frac{1}{2} \int \abs{\nabla \psi_-}^2 \ud x
  +   \int V_R(x)\left(\abs{\psi_+(x)}^2+\abs{\psi_-(x)}^2\right)  \ud x \\
  + \frac{1}{2} \iint \cfrac{\left(\abs{\psi_+(x)}^2+\abs{\psi_-(x)}^2\right)\left(\abs{\psi_+(y)}^2+\abs{\psi_-(y)}^2\right)}{\abs{x-y}} \ud x \ud y -
  \alpha \int \left( \abs{\psi_+(x)}^{8/3} + \abs{\psi_-(x)}^{8/3} \right)
  \ud x \\
 - \epsilon_+\left(\int \abs{\psi_+(x)}^2 \ud x-1\right)-\epsilon_-\left(\int \abs{\psi_-(x)}^2 \ud x-1\right),
 \label{lagran}
\end{multline}
where $(\epsilon_+,\epsilon_-)$ are Lagrange multipliers.
 \noindent

Finding the stationary variation of (\ref{lagran}) with respect to the functions $\psi_+$ and $\psi_-$ leads to effective one-electron eigenvalue equations.  
\begin{equation}
\label{eqn:Lag}
\cfrac{\delta L}{\delta\psi_i}=0\Rightarrow\left(-\frac{1}{2}\nabla^2+ V_R(x)+ \int \cfrac{\left(\abs{\psi_+(y)}^2+\abs{\psi_-(y)}^2\right)}{\abs{x-y}} \ud y-\frac{4}{3}\alpha\abs{\psi_{i}(x)}^{2/3}\right)\psi_{i}(x) = \epsilon_{i} \psi_{i}(x), i=\pm,
\end{equation}
where $(\psi_+,\psi_-)$ and $(\epsilon_+,\epsilon_-)$ satisfy normalization constraints. 

In order to determine whether the stationary extremum of $L(\psi_+,\psi_-,\epsilon_+,\epsilon_-)$ with respect to functional variation are a maximum, a minimum or a saddle point, the second order functional derivative (the Hessian matrix) may be analyzed \cite{HoudongThesis}. In the following the stationary solutions $(\psi_+,\psi_-)$ and their eigenvalues $(\epsilon_+,\epsilon_-)$ satisfy  \eqref{eqn:Lag} and \eqref{eqn:cons}. $(\lambda_i, w_i)$ are eigenvalues and eigenvectors of the Hessian Matrix, $\textbf{Hess}$, defined as the solutions to the eigenvalue problem,
\begin{equation}
\textbf{Hess}w_i(y)=\int\cfrac{\delta^2 L(\psi,\epsilon)}{\delta\psi_i(x)\delta\psi_j(y)} w_i(y)\ud y\bigg|_{(\psi,\epsilon)=(\phi,\epsilon)}=\lambda_iw_i(x), \ i=\pm,
\end{equation}
where the  Hessian matrix is defined as,
\begin{equation}
\label{eqn:Hess}
\textbf{Hess}= \left( \begin{array}{cc}
 H_{11} & \int\cfrac{2\psi_-(y)}{\abs{x-y}}\ud y \psi_+(x) \\
\int \cfrac{2\psi_+(y)}{\abs{x-y}}\ud y \psi_-(x) & H_{22}
\end{array}\right)
\end{equation}
where
\begin{equation}
\begin{array}{lcl}
&& H_{11}=-\frac{1}{2}\nabla^2+V_R+\int \cfrac{\abs{\psi_-(y)}^2}{\abs{x-y}}\ud y-\frac{20}{9}\alpha\abs{\psi_+(x)}^{2/3}-\epsilon_+,\nonumber\\
&& H_{22}=-\frac{1}{2}\nabla^2+V_R+\int \cfrac{\abs{\psi_+(y)}^2}{\abs{x-y}}\ud y-\frac{20}{9}\alpha(\abs{\psi_-(x)}^{2/3}-\epsilon_-. \nonumber\\
&&
\end{array}
\end{equation} 
In addition the eigenfunctions $w_i(x)$ satisfy the orthogonality relations,

\begin{equation}
\begin{array}{lcl}
\int \psi_i(x)w_i(x) \ud x=0, i=+,-.
\end{array}
\end{equation}

If all the eigenvalues of $\textbf{Hess}$ are positive, then there is no descent direction in the function space. Negative eigenvalues imply that there is a descent direction.   To carry out this calculation the integrals of the Coulomb potential required in \eqref{eqn:Hess} are obtained by solving the Poisson equation as in \eqref{eq.poi k-1}, and performing the numerical integrals. The calculated eigenvalues are shown in Table~\ref{tab:hess}.

\end{appendix}

\bibliographystyle{plain}
\bibliography{dft}

\begin{thebibliography}{10}

\bibitem{githubFEM}
Finite element basis functions.
\newblock http://hplgit.github.io.

\bibitem{hypre}
Hypre library.
\newblock
  https://computation.llnl.gov/projects/hypre-scalable-linear-solvers-multigrid-methods.

\bibitem{anantharaman2009existence}
Arnaud Anantharaman and Eric Canc{\`e}s.
\newblock Existence of minimizers for {K}ohn--{S}ham models in quantum
  chemistry.
\newblock {\em Annales de l'Institut Henri Poincare (C) Non Linear Analysis},
  26(6):2425--2455, 2009.

\bibitem{apra2020nwchem}
Edoardo Apr\`a, Eric~J. Bylaska, Wibe~A. De~Jong, Niranjan Govind, Karol
  Kowalski, Tjerk~P. Straatsma, Marat Valiev, Hubertus~J.J. van Dam, Yuri
  Alexeev, James Anchell, and et~al.
\newblock {NWChem: Past, present, and future}.
\newblock {\em The Journal of Chemical Physics}, 152(18):184102, 2020.

\bibitem{axelsson2001finite}
Owe Axelsson and V.~Alan Barker.
\newblock {\em Finite Element Solution of Boundary Value Problems: Theory and
  Computation}.
\newblock SIAM, 2001.

\bibitem{bank1981optimal}
Randolph~E. Bank and Todd Dupont.
\newblock An optimal order process for solving finite element equations.
\newblock {\em Mathematics of Computation}, 36(153):35--51, 1981.

\bibitem{barca2014communication}
Giuseppe M.~J. Barca, Andrew T.~B. Gilbert, and Peter M.~W. Gill.
\newblock {H}artree-{F}ock description of excited states of {H}2.
\newblock {\em The Journal of Chemical Physics}, 141(11):111104, 2014.

\bibitem{benguria1981thomas}
Rafael Benguria, Ha{\"\i}m Br{\'e}zis, and Elliott~H. Lieb.
\newblock The {T}homas-{F}ermi-von {W}eizs{\"a}cker theory of atoms and
  molecules.
\newblock {\em Communications in Mathematical Physics}, 79(2):167--180, 1981.

\bibitem{braess2001finite}
Dietrich Braess.
\newblock {\em Finite elements: theory, fast solvers and applications in solid
  mechanics, second edition}.
\newblock Cambridge University Press, 2001.

\bibitem{bramble1993multigrid}
James~H. Bramble.
\newblock {\em Multigrid methods}, volume 294.
\newblock CRC Press, 1993.

\bibitem{bramble1987new}
James~H. Bramble and Joseph~E. Pasciak.
\newblock New convergence estimates for multigrid algorithms.
\newblock {\em Mathematics of Computation}, 49(180):311--329, 1987.

\bibitem{brandt1986algebraic}
Achi Brandt.
\newblock Algebraic multigrid theory: The symmetric case.
\newblock {\em Applied Mathematics and Computation}, 19(1):23--56, 1986.

\bibitem{brandt1985algebraic}
Achi Brandt, Steve McCormick, and John Ruge.
\newblock Algebraic multigrid (amg) for sparse matrix equations.
\newblock {\em Sparsity and its Applications}, page 257, 1985.

\bibitem{brenner2008mathematical}
Susanne~C. Brenner and L.~Ridgway Scott.
\newblock {\em The mathematical theory of finite element methods}, volume~15.
\newblock Springer, 2008.

\bibitem{briggs2000multigrid}
William~L. Briggs, Van~Emden Henson, and Stephen~F. McCormick.
\newblock {\em A multigrid tutorial: second edition}.
\newblock SIAM, 2000.

\bibitem{bylaska2009adaptive}
Eric~J. Bylaska, Michael Holst, and John~H. Weare.
\newblock Adaptive finite element method for solving the exact kohn- sham
  equation of density functional theory.
\newblock {\em Journal of Chemical Theory and Computation}, 5(4):937--948,
  2009.

\bibitem{weare1995parallel}
Eric~J. Bylaska, Scott~R. Kohn, Scott~B. Baden, Alan Edelman, Ryoichi Kawai,
  M.~Elizabeth~G. Ong, and John~H. Weare.
\newblock Scalable parallel numerical methods and software tools for material
  design.
\newblock {\em Proceedings of the Seventh SIAM Conference on Parallel
  Processing for Scientific Computing}, pages 219--224, 1995.

\bibitem{kubicki}
Ying Chen, Eric Bylaska, and John Weare.
\newblock First principles estimation of geochemically important transition
  metal oxide properties.
\newblock In James~D. Kubicki, editor, {\em Molecular Modeling of Geochemical
  Reactions}, chapter~4. Wiley, 2016.

\bibitem{Yang:08}
Aron~J. Cohen, Paula Mori-S{\'a}nchez, and Weitao Yang.
\newblock Insights into current limitations of density functional theory.
\newblock {\em Science}, 321:792 -- 794, 2008.

\bibitem{Yang:12}
Aron~J. Cohen, Paula Mori-S{\'a}nchez, and Weitao Yang.
\newblock Challenges for density functional theory.
\newblock {\em Chemical Reviews}, 112:289--320, 2012.

\bibitem{claude91}
Claude Cohen-Tannoudji, Bernard Diu, and Frank Laloe.
\newblock {\em Quantum Mechanics, Volume 1}.
\newblock Wiley, 1991.

\bibitem{Cox1992}
Paul~A. Cox.
\newblock {\em Transition Metal Oxides}.
\newblock Clearendon Press, Oxford, 1992.

\bibitem{dupont2003fenics}
Todd Dupont, Johan Hoffman, Claus Johnson, Robert~C. Kirby, Mats~G. Larson,
  Anders Logg, and L.~Ridgway Scott.
\newblock {\em The FEniCS project}.
\newblock Chalmers Finite Element Centre, Chalmers University of Technology,
  2003.

\bibitem{gauss}
James~B. Foresman and Aeleen Frisch.
\newblock {\em Exploring Chemistry with Electronic Structure Methods}.
\newblock Gaussian, 1996.

\bibitem{FL2}
Rupert~L. Frank, Elliott~H. Lieb, Robert Seiringer, and Lawrence~E. Thomas.
\newblock Bipolaron and n-polaron binding energies.
\newblock {\em Physical Review Letters}, 104(21):210402, 2010.

\bibitem{FL1}
Rupert~L. Frank, Elliott~H. Lieb, Robert Seiringer, and Lawrence~E. Thomas.
\newblock Stability and absence of binding for multi-polaron systems.
\newblock {\em Publications math{\'e}matiques de l'IH{\'E}S}, 113(1):39--67,
  2011.

\bibitem{golub}
Gene~H. Golub.
\newblock {\em Matrix Computations}.
\newblock Johns Hopkins Press, 1996.

\bibitem{gontier2015existence}
David Gontier.
\newblock Existence of minimizers for {K}ohn--{S}ham within the local spin
  density approximation.
\newblock {\em Nonlinearity}, 28(1):57, 2015.

\bibitem{gontier2018lower}
David Gontier, Christian Hainzl, and Mathieu Lewin.
\newblock Lower bound on the {H}artree-{F}ock energy of the electron gas.
\newblock {\em arXiv preprint arXiv:1811.12461}, 2018.

\bibitem{gontier2018spin}
David Gontier and Mathieu Lewin.
\newblock Spin symmetry breaking in the translation-invariant {H}artree-{F}ock
  uniform electron gas.
\newblock {\em arXiv preprint arXiv:1812.07679}, 2018.

\bibitem{GriesemerHantsch:12}
Marcel Griesemer and Fabian Hantsch.
\newblock Unique solutions to {H}artree--{F}ock equations for closed shell
  atoms.
\newblock {\em Arch. Rational Mech. Anal.}, 203(3):883--900, 2012.

\bibitem{hackbusch1985multi}
Wolfgang Hackbusch.
\newblock {\em Multi-grid methods and applications}, volume~4.
\newblock Springer-Verlag Berlin, 1985.

\bibitem{harrison2004multiresolution}
Robert~J. Harrison, George~I. Fann, Takeshi Yanai, Zhengting Gan, and Gregory
  Beylkin.
\newblock Multiresolution quantum chemistry: Basic theory and initial
  applications.
\newblock {\em The Journal of Chemical Physics}, 121(23):11587--11598, 2004.

\bibitem{hehre1969self}
Warren~J. Hehre, Robert~F. Stewart, and John~A. Pople.
\newblock self-consistent molecular-orbital methods. i. use of gaussian
  expansions of slater-type atomic orbitals.
\newblock {\em The Journal of Chemical Physics}, 51(6):2657--2664, 1969.

\bibitem{HoudongThesis}
Houdong Hu.
\newblock Electronic structure models: {S}olution theory, linear scaling
  methods, and stability analysis.
\newblock {\em UCSD Ph.D. Thesis}, 2014.

\bibitem{kendall2000high}
Ricky~A. Kendall, Edoardo Apr{\`a}, David~E. Bernholdt, Eric~J. Bylaska, Michel
  Dupuis, George~I. Fann, Robert~J. Harrison, Jialin Ju, Jeffrey~A. Nichols,
  Jarek Nieplocha, et~al.
\newblock High performance computational chemistry: An overview of nwchem a
  distributed parallel application.
\newblock {\em Computer Physics Communications}, 128(1):260--283, 2000.

\bibitem{kirr2011symmetry}
Eduard Kirr, Panayotis~G. Kevrekidis, and Dmitry~E. Pelinovsky.
\newblock Symmetry-breaking bifurcation in the nonlinear {S}chr{\"o}dinger
  equation with symmetric potentials.
\newblock {\em Communications in Mathematical Physics}, 308(3):795--844, 2011.

\bibitem{weare1997adaptive}
Scott~R. Kohn, John~H. Weare, M.~Elizabeth~G. Ong, and Scott~B. Baden.
\newblock Parallel adaptive mesh refinement for electronic structure
  calculations.
\newblock {\em Proceedings of the Eighth SIAM Conference on Parallel Processing
  for Scientific Computing}, 1997.

\bibitem{le1993some}
Claude Le~Bris.
\newblock Some results on the {T}homas-{F}ermi-{D}irac-von {W}eizs{\"a}cker
  model.
\newblock {\em Differential and Integral Equations}, 6(2):337--353, 1993.

\bibitem{Lenzmann}
Enno Lenzmann.
\newblock Uniqueness of ground states for pseudorelativistic {H}artree
  equations.
\newblock {\em Analysis \& PDE}, 2(1):1--27, 2009.

\bibitem{lieb1981thomas}
Elliott~H Lieb.
\newblock Thomas-{F}ermi and related theories of atoms and molecules.
\newblock {\em Reviews of Modern Physics}, 53(4):603, 1981.

\bibitem{LiebLoss:01}
Elliott~H. Lieb and Michael Loss.
\newblock {\em Analysis}.
\newblock American Mathematical Society, 2nd edition, 2001.

\bibitem{lieb1977hartree}
Elliott~H Lieb and Barry Simon.
\newblock The {H}artree-{F}ock theory for {C}oulomb systems.
\newblock {\em Communications in Mathematical Physics}, 53(3):185--194, 1977.

\bibitem{lions1987solutions}
Pierre-Louis Lions.
\newblock Solutions of {H}artree-{F}ock equations for {C}oulomb systems.
\newblock {\em Communications in Mathematical Physics}, 109(1):33--97, 1987.

\bibitem{fenics2012}
Anders Logg, Kent-Andre Mardel, and Garth~N. Wells, editors.
\newblock {\em Automated Solution of Differential Equations by the Finite
  Element method: The FEniCS Book}.
\newblock Springer, 2012.

\bibitem{dolfin}
Anders Logg and Garth~N. Wells.
\newblock Dolfin: Automated finite element computing.
\newblock {\em ACM Transations on Mathematical Software (TOMS)}, 37(issue 2,
  article 20), 2010.

\bibitem{mccormick1987multigrid}
Stephen~F. McCormick.
\newblock {\em Multigrid methods}, volume~3.
\newblock SIAM, 1987.

\bibitem{Oliver-Perdew}
G.~L. Oliver and J.~P. Perdew.
\newblock Spin-density gradient expansion for kinetic energy.
\newblock {\em Physical Review A}, 20(2):397--403, 1979.

\bibitem{Parr1972}
Robert~G. Parr.
\newblock {\em Quantum Theory of Molecular Electronic Structure}.
\newblock W. A. Benjamin, 1972.

\bibitem{Parr-Yang}
Robert~G. Parr and Weitao Yang.
\newblock {\em Density-Functional Theory of Atoms and Molecules}.
\newblock Oxford Science Publications, 1989.

\bibitem{Peng-Perdew2017}
Haowei Peng and John~P. Perdew.
\newblock Synergy of van der waals and self-interaction corrections in
  transition metal monoxides.
\newblock {\em Physical Review B}, 96:100101 1--5, 2017.

\bibitem{Henkelman2011}
Zachary~D. Pozun and Graeme Henkelman.
\newblock Hybrid density functional theory band structure engineering in
  hematite.
\newblock {\em The Journal of Chemical Physics}, 134:224706--1--9, 2011.

\bibitem{RS4}
Michael Reed and Barry Simon.
\newblock {\em Analysis of Operators, Vol. IV of Methods of Modern Mathematical
  Physics}.
\newblock New York, Academic Press, 1978.

\bibitem{ricaud2017}
Julien Ricaud.
\newblock {\em Sym{\'e}trie et brisure de sym{\'e}trie pour certains
  probl{\`e}mes non lin{\'e}aires}.
\newblock PhD thesis, Universit{\'e} de Cergy Pontoise, 2017.

\bibitem{ricaud2017symmetry}
Julien Ricaud.
\newblock Symmetry breaking in the periodic {T}homas--{F}ermi--{D}irac--von
  {W}eizs{\"a}cker model.
\newblock {\em Annales Henri Poincar{\'e}}, 19(10):3129--3177, 2018.

\bibitem{Roll2004}
Georg Rollmann, Alexander Rohrbach, Peter Entel, and J\"urgen Hafner.
\newblock First-principle calculations of the structure and magnetic properties
  of hematite.
\newblock {\em Physical Review B}, 69:165107 1--12, 2004.

\bibitem{ruiz2010schrodinger}
David Ruiz.
\newblock On the {S}chr{\"o}dinger--{P}oisson--{S}later system: behavior of
  minimizers, radial and nonradial cases.
\newblock {\em Archive for Rational Mechanics and Analysis}, 198(1):349--368,
  2010.

\bibitem{RuskaiStillinger:84}
Mary~Beth Ruskai and Frank~H Stillinger.
\newblock Binding limit in the {H}artree approximation.
\newblock {\em Journal of Mathematical Physics}, 25(6):2099--2103, 1984.

\bibitem{gmres}
Yousef Saad and Martin~H. Schultz.
\newblock Gmres: A generalized minimal residual algorithm for solving
  nonsymmetric linear systems.
\newblock {\em SIAM J. Sci. Stat. Comput.}, 7(3):856–869, 1986.

\bibitem{SS}
Catherine Sulem and Pierre-Louis Sulem.
\newblock {\em The nonlinear {S}chr{\"o}dinger equation: self-focusing and wave
  collapse}, volume 139.
\newblock Springer Science \& Business Media, 1999.

\bibitem{Szabo1989}
Attila Szabo and Neil~S. Ostlund.
\newblock {\em Modern Quantum Chemistry}.
\newblock Dover Publications, 1989.

\bibitem{Tatebe93}
Osamu Tatebe.
\newblock The multigrid preconditioned conjugate gradient method.
\newblock {\em NASA. Langley Research Center, The Sixth Copper Mountain
  Conference on Multigrid Methods, Part 2}, 1993.

\bibitem{valiev2010nwchem}
Marat Valiev, Eric~J. Bylaska, Niranjan Govind, Karol Kowalski, Tjerk~P.
  Straatsma, Hubertus~J.J. Van~Dam, Dunyou Wang, Jarek Nieplocha, Edoardo
  Apr\`a, Theresa~L. Windus, and et~al.
\newblock Nwchem: a comprehensive and scalable open-source solution for large
  scale molecular simulations.
\newblock {\em Computer Physics Communications}, 181(9):1477--1489, 2010.

\bibitem{Boomer2002}
Henson Van~Emden and Ulrike~Meir Yang.
\newblock Boomer amg: A parallel algebraic multigrid solver and preconditioner.
\newblock {\em Applied Numerical Mathematics}, 41:155--177, 2002.

\bibitem{Weinstein}
Michael~I Weinstein.
\newblock Lyapunov stability of ground states of nonlinear dispersive evolution
  equations.
\newblock {\em Communications on Pure and Applied Mathematics}, 39(1):51--67,
  1986.

\end{thebibliography}

\end{document}